\documentclass[reqno,11pt]{imsart}
\usepackage{amsfonts,amsmath,amsthm,mathtools,bbm,bm,eucal}
\usepackage{times,amssymb,mathabx} 
\usepackage[notrig]{physics}
\usepackage{graphicx,caption,subcaption,hyperref,enumitem,multirow} 
\usepackage[ruled,vlined,linesnumbered]{algorithm2e}
\usepackage[nameinlink,capitalize,noabbrev]{cleveref}
\usepackage{crossreftools,multibib} 
\usepackage[authoryear]{natbib}
\usepackage[title,toc]{appendix}
\usepackage[dvipsnames,x11names]{xcolor}
\usepackage{tikz}

\usepackage{booktabs}
\usepackage{amsmath}
\usepackage{graphicx}

\newcites{S}{References}
\hypersetup{colorlinks=True, urlcolor=Blue, citecolor=MidnightBlue, linkcolor=RedViolet}

\numberwithin{equation}{section}

\newtheorem{theorem}{Theorem}
\newtheorem{lemma}{Lemma}
\newtheorem{proposition}{Proposition}

\theoremstyle{definition}

\newtheorem{assumption}{Assumption}

\newtheorem{examplex}{Example}
  {\popQED\endexamplex}
\newtheorem{remarkx}{Remark}
\newenvironment{remark}
  {\pushQED{\qed}\remarkx}
  {\popQED\endremarkx}

\crefname{assumption}{Assumption}{Assumptions}
\crefname{examplex}{Example}{Examples}
\crefname{theorem}{Theorem}{Theorems}
\crefname{lemma}{Lemma}{Lemmas}
\crefname{section}{Section}{Sections}
\crefname{algorithm}{Algorithm}{Algorithms}
\crefname{enumi}{Assumption}{Assumptions}
\crefname{equation}{Inequality}{Inequalities}
\Crefname{equation}{Equation}{Equations}

\makeatletter
\newcommand*{\rom}[1]{\expandafter\@slowromancap\romannumeral #1@}
\makeatother

\newcommand{\genericdel}[4]{%
  \ifcase#3\relax
  \ifx#1.\else#1\fi#4\ifx#2.\else#2\fi\or
  \bigl#1#4\bigr#2\or
  \Bigl#1#4\Bigr#2\or
  \biggl#1#4\biggr#2\or
  \Biggl#1#4\Biggr#2\else
  \left#1#4\right#2\fi
}
\newcommand{\del}[2][-1]{\genericdel(){#1}{#2}}
\newcommand{\cbr}[2][-1]{\genericdel\{\}{#1}{#2}}

 \let\abs\Abs
 \let\norm\Norm
\newcommand{\floor}[2][-1]{\genericdel\lfloor\rfloor{#1}{#2}}

\def\d{\textup{d}}

\def\E{\textsf{\textup{E}}}
\def\P{\textsf{\textup{P}}}
\def\Q{\textsf{\textup{Q}}}
\def\R{\mathbb{R}}

\DeclareMathOperator*{\argmin}{\arg\!\min}
\DeclareMathOperator*{\argmax}{\arg\!\max}

\def\ind{\mathbbm{1}}

\def\var{\textup{Var}}

\def\Fin{F_{\textsc{in}}}
\def\Fout{F_{\textsc{out}}}
\def\hFin{\widehat{F}_{\textsc{in},t+1}}
\def\hFout{\widehat{F}_{\textsc{out},t+1}}
\def\bcG{\bar{\mathcal{G}}}
\def\hg{\hat{g}}
\def\hcR{\widehat{\mathcal{R}}}
\def\hcE{\widehat{\mathcal{E}}}

\makeatletter
\def\greekvectors#1{%
 \@for\next:=#1\do{%
    \def\X##1;{\expandafter\def\csname b##1\endcsname{\bm{\csname##1\endcsname}}}
    \expandafter\X\next;}
 \@for\next:=#1\do{%
    \def\X##1;{\expandafter\def\csname h##1\endcsname{\widehat{\csname##1\endcsname}}}
    \expandafter\X\next;}
 \@for\next:=#1\do{%
    \def\X##1;{\expandafter\def\csname t##1\endcsname{\widetilde{\csname##1\endcsname}}}
    \expandafter\X\next;}
 \@for\next:=#1\do{%
    \def\X##1;{\expandafter\def\csname ba##1\endcsname{\bar{\csname##1\endcsname}}}
    \expandafter\X\next;}
 \@for\next:=#1\do{%
    \def\X##1;{\expandafter\def\csname c##1\endcsname{\check{\csname##1\endcsname}}}
    \expandafter\X\next;}
 \@for\next:=#1\do{%
    \def\X##1;{\expandafter\def\csname u##1\endcsname{\underline{\csname##1\endcsname}}}
    \expandafter\X\next;}
 \@for\next:=#1\do{%
    \def\X##1;{\expandafter\def\csname hb##1\endcsname{\widehat{\bm{\csname##1\endcsname}}}}
    \expandafter\X\next;}
}
\greekvectors{alpha,beta,gamma,delta,epsilon,zeta,theta,kappa,lambda,mu,nu,xi,pi,rho,tau,sigma,phi,chi,psi,omega,varepsilon,varpi,varphi,vartheta,  Delta,Gamma,Theta,Lambda,Xi,Pi,Sigma,Phi,Psi,Omega}
    
\@tfor\next:=abfghijlmnopqrstuwxyzABCDGHIJKLMOSTUVWXYZ\do{%
    \def\command@factory#1{\expandafter\def\csname #1\endcsname{\mathbf{#1}} }
    \expandafter\command@factory\next
}
\@tfor\next:=abcdeghijklnopqrstuvwxyzABCDEFGHIJKLMNOQRSTUVWXYZ\do{%
    \def\command@factory#1{\expandafter\def\csname b#1\endcsname{\mathbbm{#1}} }
    \expandafter\command@factory\next
}
\@tfor\next:=abcdeghijklnpqrstuvwxyzABCDEFGHIJKLMNOPQRSTUVWXYZ\do{%
    \def\command@factory#1{\expandafter\def\csname t#1\endcsname{\texttt{\textup{#1}}} }
    \expandafter\command@factory\next
}
\@tfor\next:=dknoABCDEFGHIJKLMNOPQRSTUVWXYZ\do{%
    \def\command@factory#1{\expandafter\def\csname c#1\endcsname{\mathcal{#1}} }
    \expandafter\command@factory\next
}
\@tfor\next:=abdefghijklmnopqrstuvwxyzABCDEFGHIJKLMNOPQRSTUVWXYZ\do{%
    \def\command@factory#1{\expandafter\def\csname s#1\endcsname{\textsf{\textup{#1}}} }
    \expandafter\command@factory\next
}
\@tfor\next:=abcdefghjklmnopqrstuvwxyzABCDEFGHIJKLMNOPQRSTUVWXYZ\do{
    \def\command@factory#1{\expandafter\def\csname f#1\endcsname{\mathfrak{#1}} }
    \expandafter\command@factory\next
}
\@tfor\next:=abcdeghijklmnopqstuvwxyzABCDEFGHIJKLMNOPQRSTUVWXYZ\do{%
    \def\command@factory#1{\expandafter\def\csname ba#1\endcsname{\bar{#1}} }
    \expandafter\command@factory\next
}
\@tfor\next:=xyzABCDEFGHIJKLMNOPQRSTUVWXYZ\do{%
    \def\command@factory#1{\expandafter\def\csname h#1\endcsname{\widehat{#1}} }
    \expandafter\command@factory\next
}
\@tfor\next:=abcdefghijklmnopqrstuvwxyzABCDEFGHIJKLMNOPQRSTUVWXYZ\do{%
    \def\command@factory#1{\expandafter\def\csname ti#1\endcsname{\widetilde{#1}} }
    \expandafter\command@factory\next
}
\@tfor\next:=abcdefghjklmnopqrstuvwxyzABCDEFGHIJKLMNOPQRSTUVWXYZ\do{
    \def\command@factory#1{\expandafter\def\csname u#1\endcsname{\underline{#1}} }
    \expandafter\command@factory\next
}
\@tfor\next:=ABCDEFGHIJKLMNOPQRSTUVWXYZ\do{%
    \def\command@factory#1{\expandafter\def\csname hc#1\endcsname{\widehat{\mathcal{#1}}} }
    \expandafter\command@factory\next
}
\@tfor\next:=ABCDEFGHIJKLMNOPQRSTUVWXYZ\do{%
    \def\command@factory#1{\expandafter\def\csname uc#1\endcsname{\underline{\mathcal{#1}}} }
    \expandafter\command@factory\next
}

\usepackage[letterpaper, margin=1.2in]{geometry} 

\begin{document}

\begin{frontmatter}

\title{Rate-Optimal Neural Boundary Detection from Unlabeled Noisy Images}
\runtitle{Rate-Optimal Image Boundary Detection}

\begin{aug}
\author[A]{\fnms{Kyeongho} \snm{Kim}}
\and
\author[A]{\fnms{Ilsang} \snm{Ohn}\ead[label=e1,mark]{ilsang.ohn@inha.ac.kr}}
\address[A]{Department of  Statistics, Inha University}%
\printead{e1} 
\end{aug}
\runauthor{K. Kim and I. Ohn}

\begin{abstract}
We study boundary detection for unlabeled noisy images from a statistical perspective. The aim is to recover an unknown object region from raw intensity observations without pixel-wise annotating labels or a parametric model for the intensity distributions. Motivated by robust Gibbs posterior approaches based on thresholded misclassification losses, we propose a continuous hinge-type surrogate loss for boundary detection. The proposed loss is amenable to gradient-based optimization and can be combined with deep neural networks to represent complex object boundaries. We prove that the proposed loss function is Fisher consistent under a mild separation assumption and obtain a calibration inequality linking excess surrogate risk to the symmetric difference error of the estimated region. Under a piecewise smooth boundary model, we prove that the resulting deep neural network estimator achieves the minimax-optimal boundary recovery rate, up to logarithmic factors. The piecewise smooth formulation accommodates boundaries with corners and kinks, thereby extending beyond globally smooth boundary models. Numerical experiments demonstrate that the proposed method accurately and stably recovers object boundaries across a range of noise levels and shape configurations, and compares favorably with existing unsupervised boundary detection methods.
\end{abstract}

\begin{keyword}[class=MSC2020]
\kwd[Primary ]{62G20}
\kwd[; secondary ]{62H35}
\end{keyword}

\begin{keyword}
\kwd{Boundary detection}
\kwd{Deep neural networks}
\kwd{Surrogate losses}
\end{keyword}

\end{frontmatter}

\section{Introduction}

Detecting object boundaries from noisy pixel intensities is a fundamental problem in image analysis and image segmentation, with applications ranging from medical imaging to remote sensing and industrial inspection. From a statistical viewpoint, the goal is to recover an unknown object region $\Gamma_\star \subset[0,1]^2$ (or equivalently its boundary) based on i.i.d.\ observations $\{(X_i,Y_i)\}_{i=1}^n$, where $X_i\in [0,1]^2$ denotes a pixel location and $Y_i\in\mathbb{R}$ is a noisy intensity measurement at $X_i$. A common modeling assumption is that $X$ is generated from some continuous distribution $\Q$ on $[0,1]^2$ and the conditional distribution of $Y$ depends on whether $X$ lies inside or outside the true region $\Gamma_\star\subset[0,1]^2$, namely,
    \begin{align}
  \label{eq:model_intro}
        X&\sim \Q\\
    Y|X &\sim \Fin \ind(X\in\Gamma_\star) +  \Fout\ind(X\in\Gamma_\star^\complement),  
    \end{align}
for some unknown distributions $\Fin$ and $\Fout$ on $\mathbb{R}$, where $\ind$ denotes the indicator function. While $\Fin$ and $\Fout$ may be of scientific interest in certain applications, the primary target in boundary detection is $\Gamma_\star$ itself, and it is desirable to develop methods that can accurately recover $\Gamma_\star$ without relying on a detailed parametric model for the intensity distributions.

To address this statistical boundary detection problem, it is natural to consider likelihood-based or Bayesian methods, which usually define likelihood models for both $\Fin$ and $\Fout$, along with specifying an appropriate model for the boundary. The theoretical optimality of these approaches was derived by  \citet{hall2001local,li2017bayesian}. However, while they can be effective when the likelihood is correctly specified, it may be sensitive to misspecification of the pixel intensity model, and it introduces additional nuisance parameters that can substantially increase computational cost. Motivated by these concerns, \citet{syring2020robust} proposed a robust Gibbs posterior approach that constructs a posterior distribution for the boundary directly through a loss function, avoiding explicit modeling of $\Fin$ and $\Fout$. They established that the resulting Gibbs posterior concentrates around the true boundary at a (near-)minimax optimal rate under a suitable smoothness condition on the boundary, demonstrating that likelihood-free boundary inference can be both robust and statistically efficient.

Despite these attractive theoretical properties, the Gibbs posterior formulation in \citet{syring2020robust} for boundary detection is tied to loss functions that are discontinuous and combinatorial in nature. The loss function they employed is based on a thresholded misclassification error, which leads to an objective function that is non-smooth in the boundary parameter. While this is compatible with an MCMC algorithm, it becomes challenging to scale to modern high-dimensional function classes and gradient-based optimization schemes that are now standard in large-scale image analysis. This motivates the following question:

\begin{quote}
   \normalsize{ \emph{Can we design a continuous, optimization-friendly loss for boundary detection that achieves rate-optimal recovery of $\Gamma_\star$ over practically relevant function classes such as deep neural networks?}}
\end{quote}

In this paper, we answer this question affirmatively by introducing a continuous surrogate loss for image boundary detection that serves as a hinge-type relaxation of the threshold-based loss. An important characteristic of the proposed loss, which we refer to as \textit{Fisher consistency} in analogy to classification problems \citep{lin2004note}, is its ability to recover the true object region, in the sense that the population risk associated with the proposed loss is  minimized at the true object boundary. It parallels the identifiability condition in \citet{syring2020robust}, but our continuous relaxation makes it possible to incorporate flexible function classes, including deep neural networks, in a theoretically grounded way.

The proposed loss-based framework can be viewed as reformulating boundary detection as a weighted classification problem, in which pseudo-labels are derived via intensity-based thresholding. We propose a novel \textit{adaptive loss calibration} scheme to adjust the class weights and threshold level during training in a fully data-driven manner. At a high level, the method iteratively refines the separation between candidate interior and exterior regions based on the current boundary estimate, producing a sequence of calibrated loss functions that progressively align with the underlying signal structure. This adaptive mechanism allows the procedure to operate without prior knowledge of the intensity distributions and stabilizes learning in the presence of unknown noise levels, while remaining compatible with gradient-based optimization.

On the theoretical side, we show that a deep neural network estimator trained with the proposed loss function achieves a rate-optimal convergence guarantee for boundary estimation. Notably, our framework accommodates piecewise smooth boundaries, including shapes with corners and kinks, thereby extending existing results that rely on global smoothness assumptions considered in \citet{li2017bayesian,syring2020robust}. Specifically, when the boundary consists of several pieces that have smoothness index $\beta>0$, our estimator satisfies
    \begin{align}
        \lambda(\widehat{\Gamma}_n \triangle \Gamma_\star)
    \lesssim n^{-\beta/(\beta+1)}
    \end{align}
up to a logarithmic factor with high probability, where $\lambda$ denotes the Lebesgue measure and $\triangle$ denotes the symmetric difference. This matches the classical minimax rate for boundary estimation in two dimensions \citep{mammen1995asymptotical} and demonstrates that continuous-loss-based neural boundary detection can be both computationally scalable and statistically optimal.

A large body of recent research on boundary detection employs deep learning methods because of their powerful representational capabilities \citep[e.g,][]{xie2015holistically, liu2017richer, xu2024ctbound}. They are typically trained in a supervised manner using pixel-wise labeled masks, thereby requiring substantial annotation effort to construct ground-truth segmentation labels. In contrast, the present work estimates object boundaries directly from \textit{raw} intensity measurements and can be applied in a fully unsupervised manner, providing a simple and practical alternative when annotating pixel-wise segmentation masks is often costly or infeasible such as medical imaging, remote sensing, and scientific imaging application.

The remainder of the paper is organized as follows. In the rest of this section, we provide a brief survey of related work and introduce our notation. In \cref{sec:method}, we introduce the proposed continuous loss and the corresponding empirical risk minimization procedure, along with the dynamic parameter update strategy. We also prove that the proposed loss function is Fisher consistent and provide a calibration inequality for a symmetric difference error. \cref{sec:theory} presents our main theoretical results on convergence rates. Numerical experiments are reported in \cref{sec:experiments}. \cref{sec:conclusion} concludes the paper.  Proofs and technical lemmas are given in \cref{sec:proofs}.

\subsection{Related work}

\paragraph{Boundary detection and image segmentation}
Image boundary detection is a fundamental preprocessing step for capturing structural features in images and has been extensively studied in computer vision, image processing, medical imaging, and industrial inspection, resulting in a large body of literature .Early approaches focused on local, gradient-based methods that detect edges by exploiting variations in pixel intensities, with classical examples including the Sobel, Prewitt, and Canny operators \citep{sobel1968isotropic, prewitt1970object, canny1986computational}. While these methods are computationally efficient and effective in low-noise settings, their performance deteriorates substantially in the presence of noise, where spurious gradients can obscure true boundaries. To address this limitation, subsequent work has incorporated denoising and regularization techniques, such as the K-SVD–based enhancement of Canny edge detection proposed by \citet{wei2022canny_sparse}. Beyond local gradient information, a broad class of methods has been developed to exploit global or region-based structure. These include variational and active contour models \citep{chan2001active}, multiscale contour detection frameworks \citep{arbelaez2011contour}, and information-theoretic approaches based on mutual information or hierarchical modeling \citep{isola2014crisp, ofir2020faint}. These methods improve robustness by incorporating spatial coherence and global context.

\paragraph{Deep learning approaches for boundary detection}
Recent advances in deep learning have led to a reformulation of boundary detection as a supervised learning problem based on neural networks. Early work by \citet{xie2015holistically, liu2017richer} developed convolutional neural network (CNN) architectures that exploit multiscale feature representations for pixel-wise boundary prediction. Building on this line of work, more recent approaches incorporate global contextual information through hybrid CNN–Transformer architectures, which have been shown to improve robustness in complex and noisy imaging environments \citep{xu2024ctbound}. In parallel, several studies have explored unsupervised or weakly supervised alternatives to reduce the reliance on labeled segmentation masks. For instance, \citet{lajili2025unsupervised} integrate the Ambrosio–Tortorelli variational energy into deep learning, enabling boundary estimation via energy minimization. \citet{wang2023cut} leverage objectness cues from self-supervised Vision Transformers to construct pseudo-labels, followed by iterative self-training with noise-robust loss functions.

\paragraph{Surrogate losses in classification}
In statistical learning, it is standard to replace discontinuous losses such as the $0$-$1$
misclassification loss by continuous surrogate losses (e.g., hinge or logistic losses) in order to
enable scalable optimization. For example, the exponential loss function is used by AdaBoost \citep{friedman2000additive} and the hinge loss is used by support vector machines (SVMs) \citep{steinwart2008support}.  The logistic loss, also known as cross-entropy, is commonly employed in modern machine learning approaches, including deep learning \citep{mao2023cross,zhang2024classification}. The use of such surrogate losses can be theoretically justified  by establishing a quantitative relationship between the risks evaluated with a surrogate and misclassification loss functions, which is called a calibration inequality \citep{zhang2004statistical,bartlett2006convexity}. From this perspective, the hinge loss is appealing because it yields a tight upper bound of the misclassification risk, thereby enabling us to achieve optimal convergence rates. This optimality characteristic of the hinge loss has been analyzed in the context of SVMs \citep{tarigan2006classifiers,blanchard2008statistical,steinwart2007fast} as well as deep neural networks \citep{kim2021fast}.

\paragraph{Statistical learning theory for deep neural networks}
Motivated by the empirical success of deep learning, a large literature has investigated the
theoretical properties of deep neural networks. Since a comprehensive review is beyond the scope
of this paper, we focus on results developed from the perspective of statistical learning theory.
In the context of nonparametric regression, deep neural networks have been shown to achieve
minimax optimal convergence rates over various function classes; see, for example,
\citet{schmidt2020nonparametric,kohler2021rate,imaizumi2020advantage, fang2024intrinsic, fan2024factor}. In particular, \citet{imaizumi2020advantage} demonstrates that deep networks can approximate and learn piecewise smooth functions efficiently, which is closely related to our piecewise smooth boundary model. For classification, several works have established optimality properties of deep neural network classifiers trained with convex surrogate losses. 
\citet{kim2021fast,zhang2025optimal} studied hinge loss based deep classifiers and derived optimal convergence rates under suitable margin and complexity conditions, while \citet{zhang2024classification,bos2022convergence,ohn2022nonconvex} investigated the theoretical properties of the deep classifiers minimizing the logistic loss.

\subsection{Notation}

For $x\in\R$, we let $(x)_+:=\max\{0,x\}$, which denotes the positive part of the real number $x$. Let $\floor{x}$ denote the largest integer not larger than $x$.  For a natural number $n\in\bN$, we denote $[n]:=\{1,\dots, n\}$.  For two positive sequences $(a_n)_{n\in \mathbb{N}}$ and $(b_n)_{n\in \mathbb{N}}$, we write $a_n\lesssim b_n$ or  $b_n\gtrsim a_n$, if there exists a positive constant $C>0$ such that $a_n\le Cb_n$ for any $n\in \mathbb{N}$. Moreover, we write $a_n\asymp b_n$ if both $a_n\lesssim b_n$  and  $a_n\gtrsim b_n$ hold. Let $\lambda$ denote the Lebesgue measure. For a function $f$ supported on $\cX$, let $\|f\|_{\cL_q(\cX)}:=(\int_{\cX}|f(x)|^q\d\lambda(x))^{1/q}$ denote the $\cL_q$ norm on $\cX$ for $q\in\bN$ and $\|f\|_{\cL_\infty(\cX)}:=\sup_{x\in\cX}|f(x)|$ the $\cL_\infty$ norm on $\cX$. Let $\cC^p(\cX)$ denote the class of $p$-times differentiable functions supported on $\cX$ for $p\in\bN$ and $\cC^0(\cX)$ the class of continuous functions on $\cX$.  For a set $A$, we denote by $A^\complement$ its complement and by $|A|$ its cardinality. For two sets $A$ and $B$, their symmetric difference is denoted by $A\triangle B:=(A\cap B^\complement)\cup (A^\complement \cap B)$.

\section{Methodology}
\label{sec:method}
\subsection{A new continuous loss function}

We represent the unknown region through a \textit{decision function} $g^\star:[0,1]^2\to\mathbb{R}$ as
\begin{equation}\label{eq:region_g_intro}
\Gamma_\star = \{x\in[0,1]^2: g^\star(x)\ge 0\}.
\end{equation}
Without loss of generality, we can set
    \begin{align}
    \label{eq:true_function}
        g_\star(x) = 2\ind (x\in\Gamma_\star)-1.
    \end{align}
In this paper, we estimate the true decision function $g^\star$ by minimizing empirical risk with a suitably designed loss function over a class of candidate functions (e.g., neural networks). 

Our starting point is the threshold-based misclassification loss considered in \citet{syring2020robust}, which is given by
 \begin{align*}
    \tilde{\ell}_g(y,x)&:=   \kappa\ind\del[1]{y>\xi, x\in\Gamma^\complement} +\tau\ind\del[1]{y\le \xi, x\in\Gamma}\\
    &=     \kappa\ind\del[1]{y>\xi, x<g(x)} +\tau\ind\del[1]{y\le \xi, g(x)\ge0}
    \end{align*}
for a threshold parameter $\xi\in\mathbb{R}$ and weighting parameters $\kappa>0$ and $\tau>0$, where the second equality follows from  our modeling approach for the object region $\Gamma= \{x\in[0,1]^2:g(x)\ge0\}$. 

As we mentioned in the introduction, this loss function is hard to optimize due to its discrete nature. Motivated by this observation, we replace the discontinuous indicator structure by a hinge-type continuous relaxation. Specifically, we define a ``pseudo-label''
    \begin{align}
    \label{eq:pseudo_lable}
        u_y := 2\ind(y>\xi)-1 \in \{-1,1\}
    \end{align}
and then we propose the loss
    \begin{align}
    \ell_g(y,x)
    &:= (1-u_yg(x))_+\{\kappa\ind(y>\xi)+\tau\ind(y\le \xi)\}\nonumber\\
     &= (1-u_yg(x))_+\{\kappa(u_y+1)/2+\tau(1-u_y)/2\},
     \label{eq:continuous_loss}
\end{align}
which can be viewed as a weighted hinge loss applied to the classification of the pseudo-label. Here, we drop the dependence of the loss parameters $\xi$, $\kappa$ and $\tau$ for notational simplicity. 
This loss is continuous in $g(x)$ and admits efficient optimization using standard gradient-based methods, while still retaining a direct connection to the boundary $\Gamma_\star$.

A key property of the proposed loss is that it is \emph{Fisher consistent} for boundary recovery under certain conditions on the loss parameters given in the next assumption.

\begin{assumption}[Loss function]
\label{assume:loss}
There exist loss parameter values $\xi$, $\kappa$ and $\tau$ satisfying
    \begin{align}
    \label{eq:tuning_cond}
        \Fin(\xi)< \frac{\kappa}{\kappa+\tau}< \Fout(\xi).
    \end{align}
\end{assumption}

\cref{assume:loss} requires that the inside and outside intensity distributions $\Fin$ and $\Fout$ should be separated at the threshold level $\xi$, in order to detect the boundary: if they are too close or identical, then no statistical procedure can identify the object region.

This condition is slightly weaker than Assumption A in \citet{syring2020robust}. This weaker requirement is not inherent to the loss itself; rather, it arises from our proof technique based on Bernstein's inequality, which contrasts with \citet{syring2020robust}, whose analysis relies on an exponential moment condition.

In the following proposition, we establish the Fisher consistency of the proposed loss function. For technical simplicity, we restrict our attention to the bounded function class $\bcG:=\{g: \|g\|_{\cL_\infty([0,1]^2)}\le 1\}$. This restriction is not essential since our target region is defined only through the sign of $g$, and in practice, it can be enforced by a simple truncation or normalization step without affecting the induced region.

\begin{proposition}[Fisher consistency]
\label{prop:fisher_consist}
Under \cref{assume:loss}, the function $g_\star$ in \eqref{eq:true_function} is the minimizer of the population risk evaluated with the loss function $\ell_g$, i.e., 
    \begin{align}
        g_\star \in \argmin_{g\in\bcG}\cR(g):=\E[ \ell_g(Y,X)].
    \end{align}
\end{proposition}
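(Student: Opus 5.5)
The plan is to reduce the population risk to a pointwise minimization problem. Conditioning on $X=x$ gives $\cR(g)=\E\bigl[\phi_X(g(X))\bigr]$, where for $t\in[-1,1]$
\begin{align*}
\phi_x(t):=\kappa\,\P(Y>\xi\mid X=x)\,(1-t)_+ + \tau\,\P(Y\le\xi\mid X=x)\,(1+t)_+ .
\end{align*}
For $g\in\bcG$ we have $|g(x)|\le 1$, so both hinge terms are nonnegative linear functions. Hence $\phi_x(t)=\kappa p(x)(1-t)+\tau(1-p(x))(1+t)$ with $p(x):=\P(Y>\xi\mid X=x)$. It therefore suffices to show that, for $\Q$-almost every $x$, the value $t=g_\star(x)$ minimizes $\phi_x$ over $[-1,1]$. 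Integrating against $\Q$ then gives $\cR(g_\star)\le\cR(g)$ for all $g\in\bcG$. Measurability is not an issue because $\phi_x$ is affine in $t$ with coefficients measurable in $x$.

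Next, I would use the model \eqref{eq:model_intro} to identify $p(x)$. If $x\in\Gamma_\star$, then $p(x)=1-\Fin(\xi)$; otherwise $p(x)=1-\Fout(\xi)$. The slope of the affine function $\phi_x$ is
\begin{align*}
\tau(1-p(x))-\kappa p(x)=(\kappa+\tau)\Bigl(1-p(x)-\tfrac{\kappa}{\kappa+\tau}\Bigr)=(\kappa+\tau)\Bigl(F(\xi)-\tfrac{\kappa}{\kappa+\tau}\Bigr),
\end{align*}
where $F=\Fin$ or $\Fout$ according to whether $x$ lies in $\Gamma_\star$.

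By \cref{assume:loss}, the slope is strictly negative on $\Gamma_\star$, so $\phi_x$ is minimized at $t=1=g_\star(x)$. On $\Gamma_\star^\complement$ the slope is strictly positive, so $\phi_x$ is minimized at $t=-1=g_\star(x)$. This gives the pointwise claim, and integrating completes the argument. It also shows that the minimizer is unique up to $\Q$-null sets, which may be useful later for the calibration inequality.

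The only delicate step is the bookkeeping with the pseudo-label. One must check that $u_y=1$ (weight $\kappa$) pairs with $(1-g)_+$ and that $u_y=-1$ (weight $\tau$) pairs with $(1+g)_+$. One must also check that the sign of the slope matches the direction of inequality \eqref{eq:tuning_cond}. Beyond this, the argument is routine, since restricting to $\bcG$ removes the kinks of the hinge.
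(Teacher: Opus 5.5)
Your proposal is correct and follows essentially the same route as the paper: the paper also conditions on $X=x$, uses $|g|\le 1$ to drop the positive parts, and obtains a conditional excess loss that is affine in $g(x)$, namely $\{\kappa-(\kappa+\tau)\Fin(\xi)\}(1-g(x))$ on $\Gamma_\star$ and $\{(\kappa+\tau)\Fout(\xi)-\kappa\}(1+g(x))$ off it, whose signs follow from the separation condition. The only cosmetic difference is that you minimize the conditional risk via its slope, whereas the paper works with the difference $\ell_g-\ell_{g_\star}$, whose nonnegative coefficients it then reuses for the calibration inequality and the variance bound.
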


In the following proposition, we establish a ``calibration inequality'', which relates the boundary detection error and the excess risk evaluated on the proposed continuous loss function. We need one technical condition, which assumes that the density of the pixel location distribution is bounded away from 0 and above by a certain constant.

\begin{assumption}[Pixel distribution]
\label{assume:pixel}
The distribution $\Q$ of a pixel location $X$ satisfies
    \begin{align}
       \frac{1}{A}
       \le  \inf_{x\in[0,1]^2}\frac{\d\Q}{\d\lambda}(x)
      \le   \sup_{x\in[0,1]^2}\frac{\d\Q}{\d\lambda}(x)\le A
    \end{align}
for some absolute constant $A>1$.
\end{assumption}

\begin{proposition}[Calibration inequality]
\label{prop:calib}
Under \cref{assume:loss,assume:pixel}, there exists an absolute constant $\tC_0>0$ such that
    \begin{align}
        \cR(g)-\cR(g_\star)\ge \tC_0 \lambda(\Gamma \triangle \Gamma_\star)
    \end{align}
with $\Gamma=\{x:g(x)\ge 0\}$ for any function $g\in\bcG$.
\end{proposition}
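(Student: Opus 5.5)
Conditioning on $X=x$ reduces the population risk to a pointwise problem; the calibration inequality then follows by bounding the pointwise excess risk from below on the misclassified set and integrating against $\Q$.

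Write $p(x):=\P(Y>\xi\mid X=x)$. This equals $1-\Fin(\xi)$ for $x\in\Gamma_\star$ and $1-\Fout(\xi)$ otherwise. For $g\in\bcG$ we have $|g(x)|\le 1$, so both hinge terms are nonnegative and the positive parts can be dropped. This gives
\begin{align*}
\E[\ell_g(Y,X)\mid X=x]=\kappa p(x)\{1-g(x)\}+\tau\{1-p(x)\}\{1+g(x)\}=:\phi_x(g(x)),
\end{align*}
an affine function of $g(x)$ with slope $\tau-(\kappa+\tau)p(x)$.

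By \cref{assume:loss}, for $x\in\Gamma_\star$ we have $1-p(x)=\Fin(\xi)<\kappa/(\kappa+\tau)$. Hence the slope equals $-(\kappa+\tau)\{\kappa/(\kappa+\tau)-\Fin(\xi)\}=:-c_{\textsc{in}}<0$. For $x\notin\Gamma_\star$, $1-p(x)=\Fout(\xi)>\kappa/(\kappa+\tau)$, so the slope equals $(\kappa+\tau)\{\Fout(\xi)-\kappa/(\kappa+\tau)\}=:c_{\textsc{out}}>0$. Consequently $\phi_x$ is minimized over $[-1,1]$ at $g_\star(x)=\pm1$ (this is also the content of \cref{prop:fisher_consist}). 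The pointwise excess risk is then
\begin{align*}
\phi_x(g(x))-\phi_x(g_\star(x))=c(x)\,|g(x)-g_\star(x)|,\qquad c(x)\in\{c_{\textsc{in}},c_{\textsc{out}}\},
\end{align*}
and it is nonnegative everywhere.

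Next I restrict attention to $\Gamma\triangle\Gamma_\star$. If $x\in\Gamma_\star\setminus\Gamma$, then $g(x)<0$ and $g_\star(x)=1$, so $|g(x)-g_\star(x)|>1$. If $x\in\Gamma\setminus\Gamma_\star$, then $g(x)\ge0$ and $g_\star(x)=-1$, so $|g(x)-g_\star(x)|\ge1$. Discarding the nonnegative contribution outside the symmetric difference, integrating over $\Q$, and using the lower density bound $\d\Q/\d\lambda\ge 1/A$ from \cref{assume:pixel} gives
\begin{align*}
\cR(g)-\cR(g_\star)\ge \min\{c_{\textsc{in}},c_{\textsc{out}}\}\,\Q(\Gamma\triangle\Gamma_\star)\ge \frac{\min\{c_{\textsc{in}},c_{\textsc{out}}\}}{A}\,\lambda(\Gamma\triangle\Gamma_\star).
\end{align*}
This yields the claim with $\tC_0:=\min\{c_{\textsc{in}},c_{\textsc{out}}\}/A$, which is positive by \cref{assume:loss} and depends only on $\kappa,\tau,\xi,\Fin,\Fout$ and $A$. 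The statement calls it "absolute" in that sense.

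The argument is short, but two points need care. The first is measurability of $\Gamma$, which is implicit since $g$ is measurable. The second is the need for $\|g\|_{\cL_\infty}\le1$: it makes the conditional risk exactly affine in $g(x)$, so no hinge kink arises. Without it, one would truncate $g$ to $[-1,1]$, which only lowers the risk and leaves $\Gamma$ unchanged.
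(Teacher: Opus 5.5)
Your proof is correct and takes essentially the same route as the paper. Both arguments show that the conditional excess risk equals $c(x)\,|g(x)-g_\star(x)|$, where $c(x)\ge\min\{\kappa-(\kappa+\tau)\Fin(\xi),\,(\kappa+\tau)\Fout(\xi)-\kappa\}$, and then integrate against $\Q$ using $\d\Q/\d\lambda\ge 1/A$. The only cosmetic difference is that you restrict to $\Gamma\triangle\Gamma_\star$, where $|g-g_\star|\ge 1$, before integrating, whereas the paper first bounds the excess risk below by a multiple of $\|g-g_\star\|_{\cL_1([0,1]^2)}$ and then notes that $\lambda(\Gamma\triangle\Gamma_\star)\le\|g-g_\star\|_{\cL_1([0,1]^2)}$.
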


In this paper, we propose to use an empirical risk minimizer
    \begin{align}
      \label{eq:estimator}
        \hg_n =\argmin_{g\in\cG_n}\sum_{i=1}^n \ell_g(Y_i,X_i),
    \end{align}
where $\cG_n$ is some class of functions used for estimation, e.g., neural networks. The corresponding estimator of the object region is then
    \begin{align}
        \hGamma_n=\cbr{x\in[0,1]^2:\hg_n(x)\ge0}.
    \end{align}
Although the proposed procedure is applicable to arbitrary function classes, we focus on deep neural networks in our theoretical analysis and numerical studies, since they offer both
strong approximation guarantees for complex decision boundaries and practical scalability via
efficient gradient-based optimization.

\subsection{Adaptive loss calibration}
\label{subsec:dynamic_update}

A practical challenge in loss-based boundary inference is to choose the loss parameters $(\xi,\kappa,\tau)$ to satisfy \cref{assume:loss}, since the distributions $\Fin$ and $\Fout$ are unknown. To address this, we develop an \emph{adaptive loss calibration} strategy that adaptively calibrates the loss parameters during optimization, following the loss scaling suggestion of \citet{syring2020robust}. At each iteration, we use the current estimate of $g$ to form empirical approximations of $\Fin$ and $\Fout$, and update $\xi$ to maximize the separation between the two groups, followed by an update of $(\kappa,\tau)$ to approximately enforce \eqref{eq:tuning_cond}. This yields an automatic, data-driven calibration procedure that is simple to implement and empirically stable.

Specifically, for a time-step $t\in\bN$, let $g_t$ denote the current estimate of the decision function. We update $g_{t+1}$ from $g_t$, typically using a gradient-based method to (approximately) minimize the empirical risk induced by the current loss. Based on $g_{t+1}$, we form empirical estimates of the distribution functions $\Fin$ and $\Fout$ by
    \begin{align}
        \hFin(\xi)&:=\frac{|\{i\in[n]:Y_i\le \xi, g_{t+1}(X_i)\ge 0\}|}{|\{i\in[n]:g_{t+1}(X_i)\ge 0\}|},
            \label{eq:cdf_est1}\\
         \hFout(\xi)&:=\frac{|\{i\in[n]:Y_i\le \xi, g_{t+1}(X_i)< 0\}|}{|\{i\in[n]:g_{t+1}(X_i)< 0\}|}    \label{eq:cdf_est2}
    \end{align}
We then update the threshold parameter $\xi_{t+1}$ by maximizing the separation between the two empirical
distribution functions as
    \begin{align}
    \label{eq:thresh_update}
        \xi_{t+1}\in \argmax_{\xi\in \R}\cbr{ \hFout(\xi) - \hFin(\xi)}.
    \end{align}
Next, we update the weighting parameters $(\kappa_{t+1},\tau_{t+1})$ as
        \begin{align}
         \label{eq:lossweight_update}
         \frac{\kappa_{t+1}}{\kappa_{t+1}+\tau_{t+1}} 
         =\frac{1}{2}\cbr{\hFin(\xi)+  \hFout(\xi)},
    \end{align}
so that the inequality \eqref{eq:tuning_cond} is satisfied by these estimated quantities. As the loss parameters are updated, the loss function is updated accordingly. For each $i\in[n]$, we update the pseudo-label as
    \begin{align}
        U_{i,t+1}:=2\ind(Y_i>\xi_{t+1})-1
    \end{align}
and define the updated loss by
    \begin{align}
    \label{eq:loss_update}
       \ell_{g,t+1}(Y_i, X_i):= (1-U_{i,t+1}g(X_i))_+\cbr{\kappa_{t+1}(U_{i,t+1}+1)/2+\tau_{t+1}(1-U_{i,t+1})/2}.
    \end{align}

\begin{remark}
The update formula for $\kappa$ and $\tau$ given in \eqref{eq:lossweight_update} differs from that of \citet{syring2020robust}, which sets $\kappa_{t+1}/(\kappa_{t+1}+\tau_{t+1})=\widehat{F}(\xi_{t+1}):=n^{-1}|\{i\in[n]:Y_i\le \xi_{t+1}\}|,$ i.e., the ratio depends on the marginal empirical distribution of the intensity measurements. But this update may be unstable because the estimate $\widehat{F}(\xi_{t+1})$, which is a weighted average of $\hFin(\xi_{t+1})$ and $\hFout(\xi_{t+1})$  with weights proportional to the number of pixels in their respective regions, can be dominated by the larger region. This may lead to overly imbalanced weights and, consequently, degenerate iterates where the learned decision function has the same sign for most pixels. To address this numerical instability, we instead apply the update rule \eqref{eq:lossweight_update}, which relies on their simple average with equal weights.
\end{remark}

\subsection{Summary of algorithm}

In \cref{alg:main}, we provide a complete description of the proposed algorithm used to compute the decision function estimator in \eqref{eq:estimator}, incorporating the dynamic update strategy for the loss parameters.

\begin{algorithm}[H]
\caption{Image boundary detection via continuous loss minimization}
\KwIn{Data $\{(X_i,Y_i)\}_{i\in[n]}$, initial function $g_1$, initial loss parameters $(\xi_1,\kappa_1,\tau_1)$, maximum iterations $T$.}

\KwOut{Estimated decision function $\hg_n=g_T$ and region estimator $\hGamma_n=\{x\in[0,1]^2:\hg_n(x)\ge 0\}$.}
\For {$t=1,2,\dots, T-1$} {
    Update $g_{t+1}$ from $g_t$ using a gradient-based method to (approximately) minimize the empirical risk induced by the current loss;\\
    Compute $\hFin$ and $\hFout$ according to \eqref{eq:cdf_est1}--\eqref{eq:cdf_est2} using $g_{t+1}$;\\    
    Set $\xi_{t+1}$ to satisfy \eqref{eq:thresh_update}; \\    
    Choose $(\kappa_{t+1},\tau_{t+1})$ to satisfy \eqref{eq:lossweight_update};\\    
    Update the loss function according to \eqref{eq:loss_update}.
} 
\label{alg:main} 
\end{algorithm}

\section{Theory}
\label{sec:theory} 

In this section, we demonstrate that applying the proposed loss-based minimization method with deep neural networks can yield an object region estimator that achieves an optimal convergence rate.

\subsection{Neural networks}

We introduce our notation for neural network models. For a positive integer $L\in\bN$ larger than 1 and a $(L+1)$-dimensional vector of positive integers $m_{0:L}:=(m_0, m_1,\dots, m_{L})\in\bN^{L+1}$, we denote $\Theta(m_{0:L}; B):=\bigotimes_{l=1}^{L}([-B,B]^{m_{l}\times m_{l-1}}\times [-B,B]^{m_{l}})$, where $B>0$ is a magnitude bound. For a \textit{network parameter} $\theta=((W_l, b_l))_{l\in[L]}\in\Theta(m_{0:L}; B)$, we define the \textit{deep  neural network} $\tilde{g}(\cdot|\theta)$ induced by the network parameter $\theta$ as
    \begin{align*}
         \tilde{g}(x|\theta)=[W_{L},b_{L}]\circ\rho\circ[W_{L-1},b_{L-1}]\circ\cdots\circ\rho\circ[W_{1},b_{1}]x, 
    \end{align*}
where $[W_{k},b_{k}]$ denotes the affine transformation represented as a multiplication by the weight matrix $W_k$ and an addition of the bias vector $b_k$, i.e., $[W_{k},b_{k}]x = W_kx + b_k$, and $\rho$ does the elementwise ReLU (rectified linear unit) activation function $\rho(x)=((x_j)_+)_{j}$. To retain the Fisher consistency in \cref{prop:fisher_consist}, we truncate the output of a neural network at level $1$ as
    \begin{align*}
        g(x|\theta)=\min\{\max\{\tilde{g}(x|\theta),-1\},1\}.
    \end{align*}
We model the decision function $g_\star$ by a class of neural networks defined as
    \begin{align*}
        \cG(L, M, B):=\cbr{g(\cdot|\theta):\theta\in\Theta(m_{0:L};B)\text{ with }m_0=2,\max_{1\le l\le L-1}m_l\le M, m_L=1},
    \end{align*}
which is a set of deep neural networks with  depth $L$, width $M$, magnitude bound $B$ and truncation level $1$.

\subsection{Piecewise smooth boundaries}

We assume that the boundary of the true object region consists of several pieces that are H\"older smooth. Let $\cH^\beta([0,1],K)$ denote the H\"older ball defined as 
\begin{equation*}
	\cH^\beta([0,1], r):=\cbr{g\in\cC^{\floor{\beta}}([0,1]):\|g\|_{\cH^\beta([0,1])}\le r},
	\end{equation*}
where $\|\cdot\|_{\cH^\beta([0,1])}$ denotes the H\"older norm defined as
    \begin{align*}
    \|g\|_{\cH^\beta([0,1])}
        :=\max\cbr{\max_{k\in\bN_0:k<\beta}\norm[2]{\frac{\d^{k}g}{\d x^k}}_{\cL_\infty([0,1])}
        ,\sup_{x_1,x_2\in [0,1]: x_1\neq x_2 }\frac{|\partial^{\floor{\beta}}g(x_1)-\partial^{ \floor{\beta}}g(x_2)|}{|x_1-x_2|^{\beta- \floor{\beta}}}}.
    \end{align*}

\begin{assumption}[Object region]
\label{assume:true}
The true object region is given by
         \begin{align}
         \label{eq:true_region}
        \Gamma_\star = \bigcap_{j=1}^J \cK_{j}
    \end{align}
for some fixed $J\in\bN$,  where each $K_j$ is a half-space-like region represented as
    \begin{align*}
        \cK_j \in \cbr[2]{
        \{x: x_1\ge h_j(x_2)\},
        \{x: x_1\le h_j(x_2)\}, 
        \{x: x_2\ge h_j(x_1)\},
        \{x: x_2\le h_j(x_1)\}
    },
    \end{align*}
with a boundary function $h_j\in\cH^\beta([0,1],r)$ for some $\beta>0$ and $K>0$.
\end{assumption}

Our piecewise smooth boundary assumption generalizes the globally smooth boundary condition considered in \citet{li2017bayesian,syring2020robust}. For example, a rectangular region $\Gamma^\star=[a,b]\times[c,d]\subset[0,1]^2$ has a boundary consisting of four infinitely smooth line segments, but it is not globally $\cC^1$ due to the presence of corners. Such shapes are excluded under global smoothness assumptions, while they are naturally included in our framework since $\Gamma_\star$ can be written as the intersection of half-space-like sets with H\"older boundary functions (in this case, constants). Similarly, regions with kinked boundaries can be represented using piecewise smooth functions (e.g., piecewise linear), which are excluded under global smoothness assumptions.

\subsection{Convergence rate}

The first step to derive the optimal convergence property of the proposed estimator is to show that deep neural networks can closely approximate the true decision function.

\begin{theorem}[Approximation]
\label{thm:approximation}
Suppose that \cref{assume:true} is made.
There exist absolute positive constants $L_0$, $\tC_1$, $\tC_2$ and $\tC_3$ such that for any $D\in\bN\setminus\{1\}$  there exists a neural network $g^\dag\in\cG(L_0, \tC_1 D, \tC_1D^{\tC_2})$ that satisfies
    \begin{align}
        \|g^\dag-g_\star\|_{\cL_1([0,1]^2)}\le \tC_3 D^{-2\beta}.
    \end{align}
\end{theorem}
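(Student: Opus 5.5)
The plan is to build $g^\dag$ in three layers: smooth boundary functions, then approximate indicators of the half-space-like sets $\cK_j$, then an intersection via a minimum. Write $g_\star = 2\prod_{j=1}^J \ind(x\in\cK_j)-1 = 2\min_{j}\ind(x\in\cK_j)-1$. It therefore suffices to approximate each indicator $\ind(x\in\cK_j)$ in $\cL_1$ by a network $\phi_j$ with values in $[0,1]$. We then combine them by $2\min_j\phi_j-1$.

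For the combination, the minimum of $J$ numbers is computed exactly by a ReLU network of depth $O(\log J)$ and constant width, using $\min(a,b)=a-(a-b)_+$. Since $J$ is fixed, this is constant size. Because $|\min_j a_j-\min_j b_j|\le\max_j|a_j-b_j|\le\sum_j|a_j-b_j|$, we get
\[
\|g^\dag-g_\star\|_{\cL_1}\le 2\sum_{j=1}^J\|\phi_j-\ind_{\cK_j}\|_{\cL_1}.
\]
The final truncation at $\pm1$ only helps, since $g_\star\in\{-1,1\}$.

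For a single piece, say $\cK_j=\{x:x_1\ge h_j(x_2)\}$, we proceed in two steps.

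\textbf{Step 1 (approximating $h_j$).} We use a known result for ReLU approximation of univariate H\"older functions, e.g.\ Yarotsky / Lu--Shen--Yang--Zhang or the result used in \citet{schmidt2020nonparametric}, \citet{imaizumi2020advantage}. It gives a network $\tilde h_j$ of fixed depth $L_0'$, width $O(D)$ and weights bounded by $O(D^{c})$ with $\|\tilde h_j-h_j\|_{\cL_\infty([0,1])}\lesssim D^{-2\beta}$. The rate $D^{-2\beta}$ with width $D$ at constant depth is the main obstacle. A plain piecewise-polynomial construction with $D$ pieces gives only $D^{-\beta}$. Reaching $D^{-2\beta}$ at fixed depth requires the bit-extraction / point-fitting construction of Lu et al.\ (rate $(N^2L^2)^{-\beta/d}$ with $d=1$ and $L$ fixed, i.e.\ $N^{-2\beta}$). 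So I would invoke that theorem directly and check that its weights can be kept polynomially bounded in $D$. If needed, I would use the variant with bounded weights, or allow depth $L_0$ large but fixed. I expect this bookkeeping of weight magnitudes to be the hardest part.

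\textbf{Step 2 (soft threshold).} Set $\epsilon=D^{-2\beta}$ and
\[
\phi_j(x)=\sigma_\epsilon\bigl(x_1-\tilde h_j(x_2)\bigr),\qquad \sigma_\epsilon(t)=\min\{1,(t/\epsilon+1)_+\},
\]
which needs two ReLUs with weights of order $\epsilon^{-1}=D^{2\beta}$, i.e.\ polynomial in $D$. Then $\phi_j$ equals $\ind_{\cK_j}$ except on the strip $\{x:|x_1-h_j(x_2)|\le \epsilon+\|\tilde h_j-h_j\|_\infty\}$. By Fubini, this strip has Lebesgue measure at most $2(\epsilon+C D^{-2\beta})\lesssim D^{-2\beta}$, and $|\phi_j-\ind_{\cK_j}|\le1$ everywhere. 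Hence $\|\phi_j-\ind_{\cK_j}\|_{\cL_1}\lesssim D^{-2\beta}$. The other three orientations follow by swapping coordinates or signs.

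Finally, we stack the $J$ subnetworks in parallel, padding depths with identity maps $t=(t)_+-(-t)_+$. The resulting width is $O(JD)=\tC_1D$, the depth is constant $L_0$, and the weights are bounded by $\tC_1D^{\tC_2}$, with $\tC_2$ absorbing the exponent $2\beta$ and the Step 1 bound. Combining this with the display above yields $\tC_3D^{-2\beta}$.
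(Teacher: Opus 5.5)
Your proposal is correct and is essentially the paper's argument. The paper also takes the sup-norm rate $D^{-2\beta}$ for each $h_j$ as a black box (Theorem A.1 of Fang et al., which it quotes as directly giving fixed depth, width $O(D)$ and weights $O(D^{C})$, so the weight bookkeeping you worry about is outsourced; citing Lu et al.\ verbatim would instead lose a logarithmic factor, since its width is of order $N\log N$), and then uses the same two-ReLU ramp of slope $\epsilon^{-1}$ with a Fubini bound on the strip around the graph of $h_j$. The only difference is the intersection step: the paper feeds the $k_j^\dag$ into an approximate-multiplication network (Lemma 16 of Fan et al.), whereas your exact ReLU minimum with the explicit affine map $2\min_j\phi_j-1$ is slightly cleaner, adding no approximation error and matching the $\{-1,1\}$-valued target $g_\star$ directly.
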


With \cref{thm:approximation} in hand, we derive the convergence rate of the proposed region estimator with deep neural networks by employing the standard concentration argument for empirical processes.

\begin{theorem}[Convergence rate]
\label{thm:conv}
Under \cref{assume:loss,assume:pixel,assume:true}, the empirical risk minimizer $\hg_n$ given in \eqref{eq:estimator}  with $\cG_n=\cG(L_0, M_n, B_n)$ leads to the estimator $\hGamma_n$ of the object region such that
    \begin{align}
        \lambda(\hGamma_n \triangle\Gamma_\star)\le \tC_1\cbr{n^{-\beta/(\beta+1)}\log n +\frac{\log (1/\delta)}{n}}
    \end{align}
with probability at least $1-\delta$ when $\tC_2 n^{1/(2\beta+2)}\le M_n \le \tC_3 n^{1/(2\beta+2)} $ and $ n^{\tC_4}\le B_n\le n^{\tC_5}$ for some absolute positive constant  $L_0$ and $\tC_1,\dots, \tC_5$.
\end{theorem}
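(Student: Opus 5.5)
We follow the standard localization route for hinge-type losses (as in \citet{kim2021fast}), combining \cref{prop:calib}, \cref{thm:approximation} and a Bernstein-type concentration bound. Write $\cE(g):=\cR(g)-\cR(g_\star)$ and let $\cR_n$ denote the empirical risk. By \cref{prop:calib}, it suffices to bound $\cE(\hg_n)$ at the stated rate, since $\lambda(\hGamma_n\triangle\Gamma_\star)\le \tC_0^{-1}\cE(\hg_n)$. Take $D\asymp n^{1/(2\beta+2)}$ in \cref{thm:approximation}, so that $g^\dag\in\cG_n$ once the constants on $M_n$ and $B_n$ are chosen appropriately. The loss is $(\kappa\vee\tau)$-Lipschitz in $g(x)$, hence
\[
\cE(g^\dag)\le(\kappa\vee\tau)\E|g^\dag(X)-g_\star(X)|\le A(\kappa\vee\tau)\tC_3D^{-2\beta}\asymp n^{-\beta/(\beta+1)}.
\]
The basic inequality $\cR_n(\hg_n)\le\cR_n(g^\dag)$ then gives
\[
\cE(\hg_n)\le\cE(g^\dag)+(\cR_n-\cR)(\ell_{g^\dag}-\ell_{g_\star})-(\cR_n-\cR)(\ell_{\hg_n}-\ell_{g_\star}).
\]

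\textbf{Variance condition.} The key step is a Bernstein-type bound
\[
\E(\ell_g-\ell_{g_\star})^2\le C\,\cE(g)\qquad\text{for all } g\in\bcG.
\]
Since all functions take values in $[-1,1]$, the loss difference is bounded. Also $|\ell_g-\ell_{g_\star}|\le(\kappa\vee\tau)|g-g_\star|\le 2(\kappa\vee\tau)$, so $\E(\ell_g-\ell_{g_\star})^2\lesssim\E|g(X)-g_\star(X)|$. We then need $\cE(g)\gtrsim\E|g-g_\star|$. This should follow from the proof of \cref{prop:calib}: conditional on $X=x$, the hinge risk is piecewise linear in $g(x)\in[-1,1]$, minimized at $g_\star(x)=\pm1$, with slope bounded below by a constant. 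Concretely, inside $\Gamma_\star$ the slope is $\kappa(1-\Fin(\xi))-\tau\Fin(\xi)>0$ by \cref{assume:loss}, and the analogous strict inequality holds outside. Hence the conditional excess risk is at least $c|g(x)-g_\star(x)|$, and \cref{assume:pixel} converts this pointwise bound into $\cE(g)\gtrsim\E|g-g_\star|$. This is exactly the Tsybakov exponent-1 situation, which produces the fast $1/n$-type rate.

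\textbf{Concentration.} Next we need a covering bound $\log\cN(\epsilon,\cG_n,\|\cdot\|_\infty)\lesssim L_0^2M_n^2\log(B_nM_n/\epsilon)$. Because $L_0$ is fixed and $B_n$ is polynomial in $n$, this is $\lesssim M_n^2\log(n/\epsilon)\asymp n^{1/(\beta+1)}\log n$. Applying Bernstein's inequality with a union bound over an $\epsilon=1/n$ net, together with the variance condition, gives a high-probability bound on the centered term uniformly in $g$:
\[
|(\cR_n-\cR)(\ell_g-\ell_{g_\star})|\le\tfrac12\cE(g)+C\,\frac{n^{1/(\beta+1)}\log n+\log(1/\delta)}{n},
\]
using $\sqrt{ab}\le a/2+b/2$. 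A peeling argument could replace this, but the direct net plus AM–GM step suffices here. For the fixed function $g^\dag$, Bernstein's inequality gives a term of order $\cE(g^\dag)+\log(1/\delta)/n$.

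\textbf{Conclusion and main obstacle.} Absorbing $\tfrac12\cE(\hg_n)$ into the left-hand side yields
\[
\cE(\hg_n)\lesssim n^{-\beta/(\beta+1)}\log n+\frac{\log(1/\delta)}{n},
\]
since $n^{1/(\beta+1)}/n=n^{-\beta/(\beta+1)}$, and \cref{prop:calib} finishes the proof. I expect the main obstacle to be the variance condition, which must hold for real-valued (not merely $\pm1$) $g$. This needs the uniform positive slope bound from \cref{assume:loss} together with careful handling of the truncation at level $1$. The entropy bound for truncated ReLU networks is standard, and truncation is $1$-Lipschitz, so it does not increase covering numbers.
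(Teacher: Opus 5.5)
Your proposal is correct and follows essentially the same route as the paper. The argument is the same: approximation by $g^\dag$ from \cref{thm:approximation}, the variance bound $\E(\ell_g-\ell_{g_\star})^2\lesssim\cE(g)$ (the paper's \cref{lemma:variance}, which rests on the same linear conditional excess-risk identity you use), then Bernstein plus a union bound over an $n^{-1}$-net with the standard covering bound, and finally \cref{prop:calib}. The only differences are cosmetic: the paper bounds $\sup_{g}\{\cE(g)-2\hcE(g)\}$ one-sidedly instead of using your two-sided deviation bound with $\tfrac12\cE(g)$ absorption, and it controls the $g^\dag$ term through Bernstein on $|g^\dag(X_i)-g_\star(X_i)|$ rather than on the loss difference.
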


The convergence rate in the above theorem matches the minimax rate for boundary estimation in two dimensions \citep{mammen1995asymptotical}, up to a logarithmic factor.

\begin{remark}
\cref{thm:conv} is derived for fixed loss parameters $(\xi,\kappa,\tau)$ satisfying \cref{assume:loss}.  The adaptive calibration strategy provided in \cref{subsec:dynamic_update} is a practical procedure that aims to approximately maintain this condition in a data-driven manner during optimization.
\end{remark}

\begin{remark}
Since the required network width in \cref{thm:conv} depends on the unknown  smoothness level $\beta$, the resulting procedure is not fully data-adaptive. Nevertheless, extending this result to adaptive inference procedures can be done without much difficulty by using recently proposed adaptive deep learning approaches \citep[e.g.,][]{ohn2022nonconvex,kong2023masked,kurisu2025adaptive}.
\end{remark}

Our analysis can be extended in a straightforward manner to a multiple-object setting. Suppose that the true decision region is represented as
    \begin{align*}
        \Gamma_\star=\bigcup_{k=1}^K  \Gamma_{\star,k}
    \end{align*}
where $\Gamma_{\star,1},\dots,\Gamma_{\star,K}$ are disjoint subsets of $[0,1]^2$, corresponding to $K$ objects in the image, and each $\Gamma_{\star,k}$ is of the form
\eqref{eq:true_region}. Even in this multiple-object setting, the proposed procedure attains the same convergence rate as in \cref{thm:conv}. Since the proof follows by applying the same argument to each component and summing the resulting errors, we omit the details.

\section{Numerical studies}
\label{sec:experiments}

\subsection{Simulation examples}
Under the assumption that the intensity inside the true region is higher than that outside, we conducted simulation studies under both Gaussian and Poisson noise models. 
All synthetic images were generated on a $128\times128$ lattice and rescaled to the continuous domain $[0,1]^2$.  For each scenario, we first generated a noiseless piecewise-constant image and then added Gaussian or Poisson noise.  In the Gaussian setting, the observed intensity was generated as
\[
Y(x)\mid x\in\Gamma^\star \sim N(\mu_{\mathrm{in}},\sigma^2),
\qquad
Y(x)\mid x\notin\Gamma^\star \sim N(\mu_{\mathrm{out}},\sigma^2),
\]
where we fix the normal means $\mu_{\mathrm{in}}=2$ and $\mu_{\mathrm{out}}=1$, and vary the standard deviation $\sigma\in\{0.1,0.4,0.7,1.0,1.3\}$. In the Poisson setting, the observed intensity was generated as
\[
Y(x)\mid x\in\Gamma^\star \sim \mathrm{Poisson}(\lambda_{\mathrm{in}}),
\qquad
Y(x)\mid x\notin\Gamma^\star \sim \mathrm{Poisson}(\lambda_{\mathrm{out}}),
\]
where we fix the outside intensity $\lambda_{\mathrm{out}}=3$ and vary the inside intensity $\lambda_{\mathrm{in}}\in\{5,7,9,11,13\}$. We considered three single-object shapes and one multi-object configuration.

In the single-object experiments, the true object region $\Gamma^\star$ was generated using the following fixed shape parameters.  For the star-shaped case, we used center $(0.5,0.5)$, outer radius $0.33$, inner radius $0.14$, five points, and rotation $0$.  For the triangular case, we used center $(0.5,0.5)$, radius $0.3$, and rotation $0$. Also, one side of triangle is replaced by sine wave.
For the elliptic case, we used center $(0.5,0.5)$, semi-axis lengths $(0.20,0.12)$, and rotation $0.1$. The multi-object scenario was constructed as the union of these three components.  We used the following fixed configuration: a star centered at $(0.35,0.35)$ with outer radius $0.18$, inner radius $0.08$, $5$ points, and rotation $0.2$. A triangle centered at $(0.72,0.35)$ with radius $0.10$ and rotation $1.0$, and an ellipse centered at $(0.55,0.72)$ with semi-axes $(0.14,0.09)$ and rotation $0.6$.

The decision function $g$ was trained using the continuous loss introduced in \eqref{eq:loss_update}
with $(\xi,\kappa,\tau)$ updated dynamically during optimization according to the procedure described in Section \eqref{subsec:dynamic_update}. And, the initial values of $\kappa$ and $\tau$ were set to $0.5$. The decision function was modeled by a three-layer neural network with ReLU activation function and measuring error with Lebesgue measure of the symmetric difference. We compared the proposed method with the methods of \citet{syring2020robust}, \citet{lajili2025unsupervised}, \citet{chan2001active}, and \citet{wang2023cut}. Each experiment was repeated five times, and we report the average symmetric difference error.

\begin{figure}[!htbp]
    \centering
    \includegraphics[width=1\linewidth]{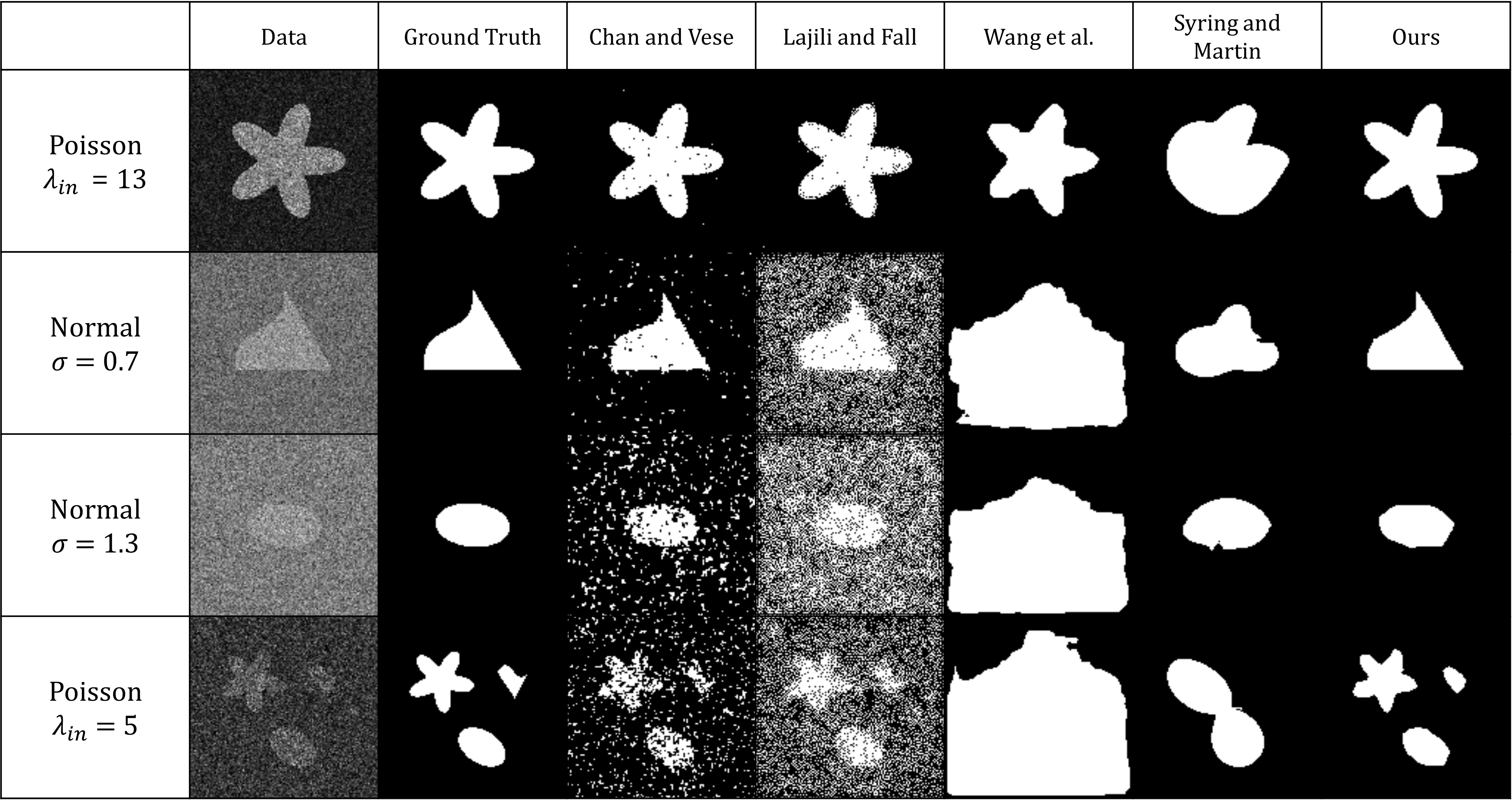}
    \caption{Qualitative comparison of boundary detection results in representative synthetic scenarios. Each row corresponds to a different object shape and noise setting, and each column shows the observed noisy image, the ground truth region, and the estimated regions produced by the competing methods. The proposed method accurately recovers the main object boundaries across both single- and multi-object settings, whereas competing methods often produce fragmented, distorted, or overly enlarged regions in more challenging cases.}
    \label{fig:fig1}
\end{figure}

Fig.~\ref{fig:fig1} presents qualitative visualizations of the competing methods for several representative scenarios. The proposed method faithfully recovers object boundaries across a variety of scenarios, whereas the competing methods often produce noisy boundaries, miss parts of the object, or detect distorted regions under more challenging noise conditions.

\begin{figure}[!htbp]
    \centering
    \includegraphics[width=1\linewidth]{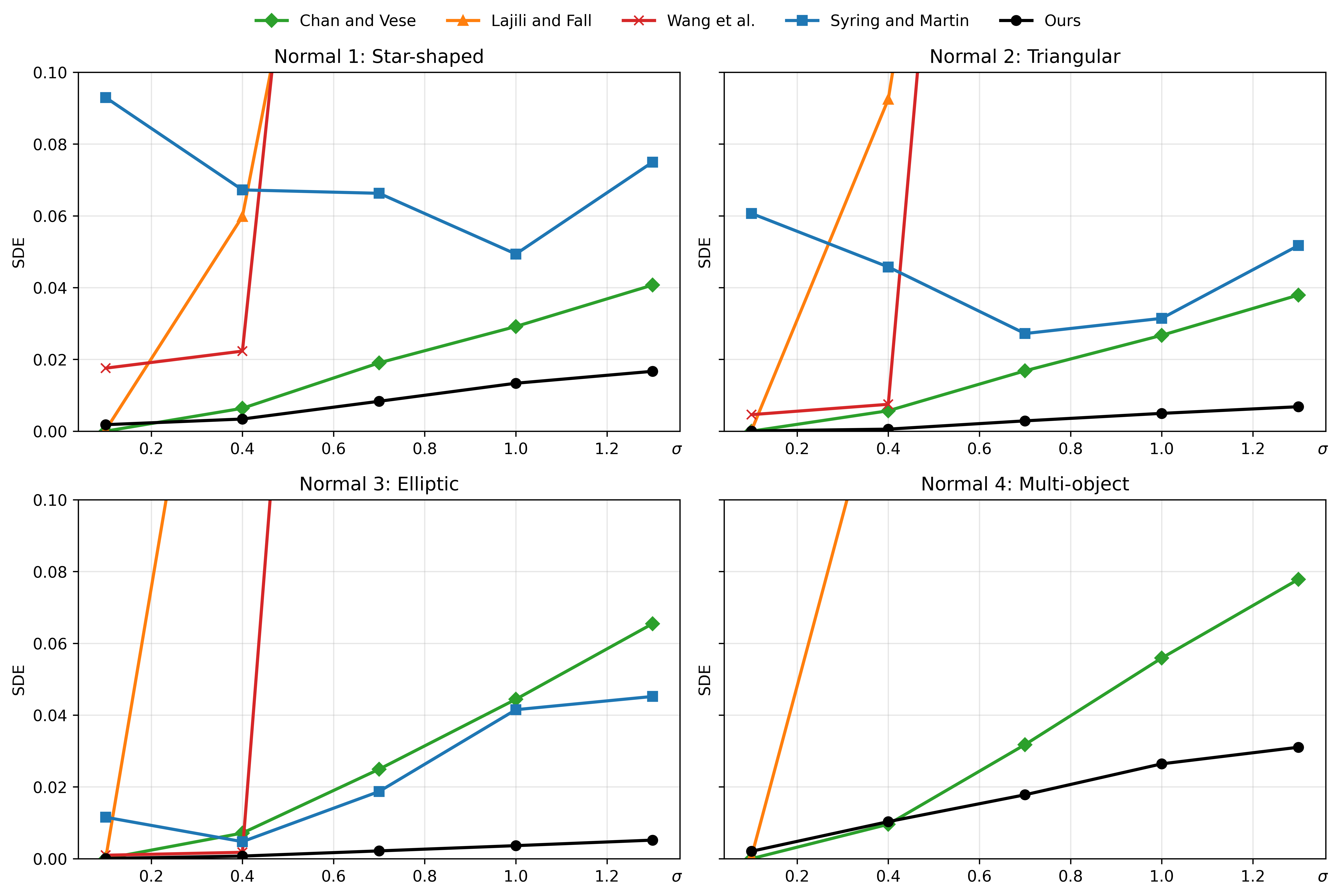}
    \caption{Symmetric difference error (SDE) under Gaussian noise for the synthetic examples. The error is measured by the Lebesgue measure of the symmetric difference  between the estimated region and the true object region, averaged over five independent repetitions. The four panels correspond to the star-shaped, triangular, elliptic,and multi-object scenarios. As the noise level $\sigma$ increases, the proposed method remains comparatively stable and shows a slower increase in error than the competing methods.}
    \label{fig:fig2}
\end{figure}

Fig.~\ref{fig:fig2} presents the  results under Gaussian noise, where the proposed method consistently achieved low error across all three single-object shapes. As the interior intensity contrast weakened and $\sigma$ increased, the performance of the competing methods deteriorated rapidly, while the proposed method showed only a gradual increase in error. A similar trend was observed in the multi-object setting. Although the error of our method also increased as the noise level became higher, its increase was much more moderate than that of the competing methods.

\begin{figure}[!htbp]
    \centering
    \includegraphics[width=1\linewidth]{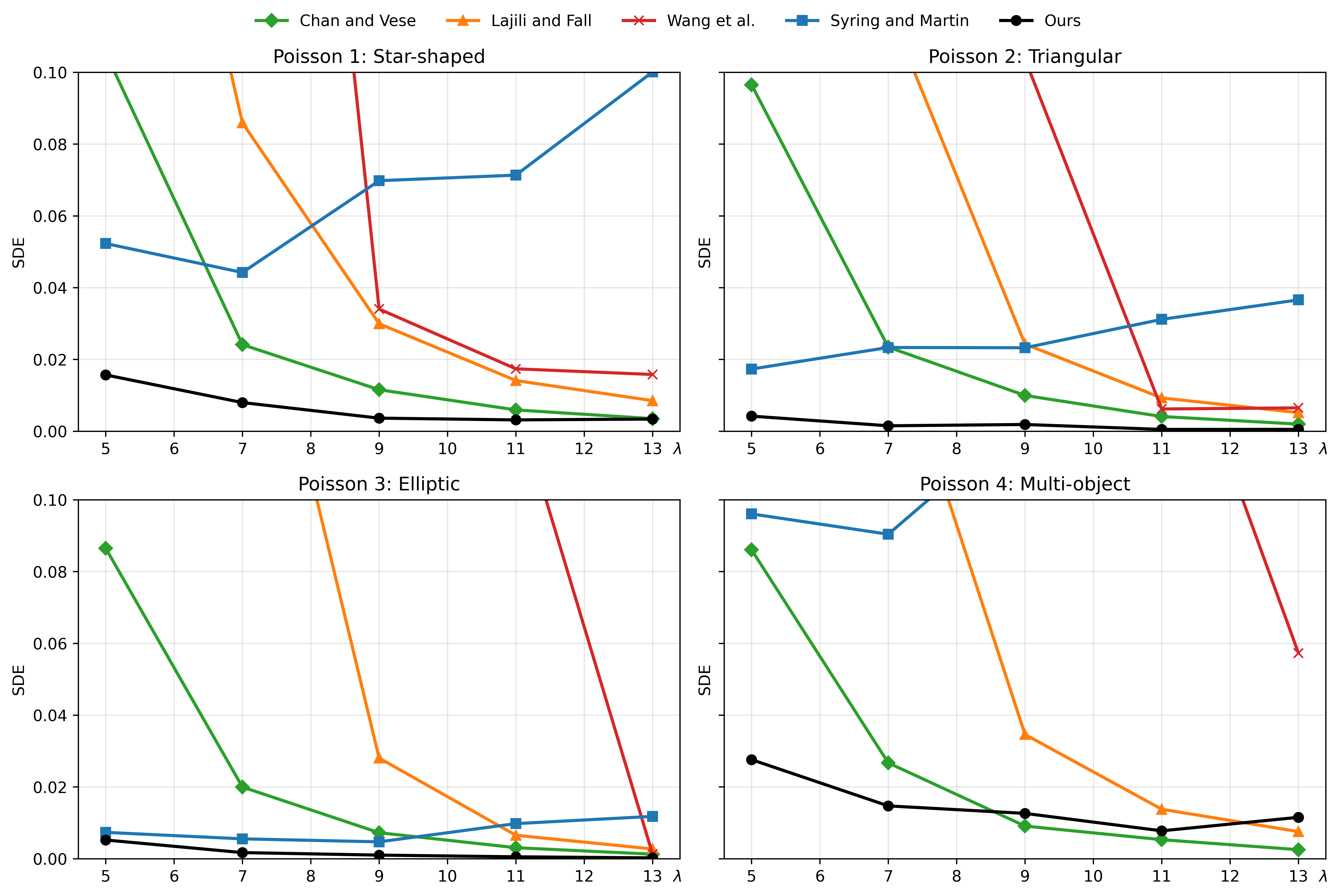}
    \caption{Symmetric difference error (SDE) under Poisson noise for the synthetic examples. The error is measured by the Lebesgue measure of the symmetric difference between the estimated and true regions, averaged over five independent repetitions. The four panels correspond to the star-shaped, triangular, elliptic, and multi-object scenarios. Smaller values of $\lambda_{\mathrm{in}}$ correspond to weaker separation between the object and the background, making boundary recovery more difficult. The proposed method maintains relatively low error across most scenarios.}
    \label{fig:fig3}
\end{figure}

A similar pattern was observed under Poisson noise, as shown in Fig.~\ref{fig:fig3}. The proposed method and the competing methods generally achieved low error in large $\lambda$. Even when the interior intensity parameter decreased so that distinguishing the boundary from the background became more difficult, the proposed method maintained low error in most scenarios. In contrast, the competing methods showed a rapid increase in error as $\lambda$ decreased, and in some cases failed to recover the boundary properly. In the low $\lambda$ settings, the error of the proposed method also increased under the most challenging low-intensity conditions, but overall it still exhibited stable performance.

\subsection{Real data analysis}
 
We also considered real microscopy images from the BBBC038 dataset, a publicly available benchmark dataset from the Broad Bioimage Benchmark Collection \citep{caicedo2019nucleus}. The dataset consists of fluorescence microscopy images of cell nuclei and their corresponding annotated pseudo masks. 
Compared with the synthetic settings, these images are considerably more heterogeneous: nuclei vary in shape, size, contrast, and local background intensity. This makes the dataset a useful testbed for examining whether a boundary detection method remains stable under realistic image conditions. The dataset gives truth nucleus mask, respectively. For evaluation, the masks contained in each nucleus were merged into a whole ground truth mask, which was treated as the true object region $\Gamma^\star$.

We selected $13$ images from the dataset and randomly extracted five $64\times64$ crops from each image, giving $65$ cropped images in total.  The crops were taken from different spatial locations so that the experiment included a range of local nuclear structures and background patterns.  All cropped images were normalized to have intensity values in $[0,1]$, and the pixel locations were rescaled to $[0,1]^2$. 

To assess robustness to additional noise, we added independent Gaussian noise to the normalized images.  For each pixel location $x$, the noisy intensity was generated by
\[
Y_\sigma(x)=Y_0(x)+\varepsilon(x),
\qquad
\varepsilon(x)\sim N(0,\sigma^2),
\]
where $Y_0(x)$ is the normalized original intensity and
\[
\sigma\in\{0.00,0.02,0.04,0.06,0.08,0.10\}.
\]
The resulting noisy images were clipped to the range $[0,1]$. 

The proposed method was compared with the methods of \citet{syring2020robust}, \citet{lajili2025unsupervised}, \citet{chan2001active}, and \citet{wang2023cut}, which were also used in the simulation experiments.  The error was measured by the Lebesgue measure of the symmetric difference between the estimated region and the pseudo mask on the $64\times64$ grid, and the errors were averaged over the $65$ crops for each noise level.

\begin{figure}[h]
    \centering
    \includegraphics[width=1\linewidth]{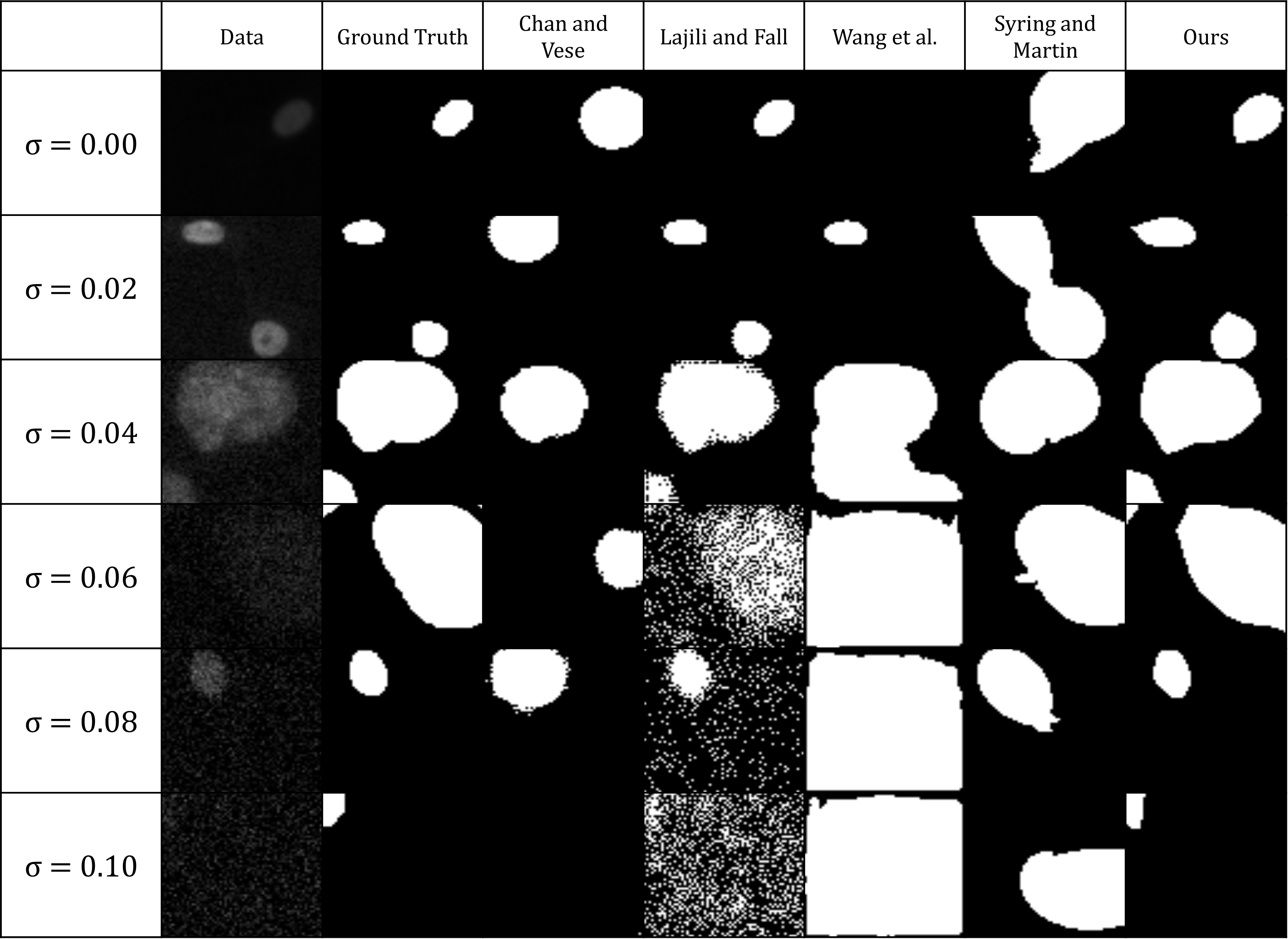}
    \caption{Qualitative comparison on cropped BBBC038 microscopy images under different levels of added Gaussian noise. Each row corresponds to a different noise level, and each column shows the noisy input image, the ground-truth mask, and the estimated regions obtained by the competing methods. The results illustrate the difficulty of real microscopy images, where nuclei may have weak, blurred, or spatially heterogeneous boundaries. The proposed method remains comparatively stable under moderate and high noise levels.}
    \label{fig:bbbc_table}
\end{figure}

Fig.~\ref{fig:bbbc_table} shows representative segmentation results.  The real microscopy images are more difficult than the synthetic examples because the foreground and background are not separated by a simple intensity jump. In several crops, the nuclei have weak or blurred boundaries, and the surrounding regions show gradual intensity changes.  Some nuclei are also truncated by the crop boundary. These features make the problem less consistent with an ideal two-region model. This mismatch is most visible in the noise-free case. The proposed method assumes that the image can be approximately described by two intensity distributions, separated by a single target region $\Gamma_\star$.  However, in some BBBC038 crops, the nuclear signal has a layered structure: a bright core is surrounded by a weaker gradation region. In such cases, the adaptive calibration step may regard part of this gradation as belonging to the interior distribution $\Fin$, which leads to overly large estimates of object regions.

After additional Gaussian noise is introduced, this local gradation becomes less dominant, and the proposed method tends to recover the main nuclear region more consistently.  The baseline methods show different types of instability as the noise level increases. The same phenomenon is observed in \citep{syring2020robust} The method of \citet{lajili2025unsupervised} often produces speckle-like false positives, whereas \citet{wang2023cut} tends to estimate overly large object regions in high-noise cases.  The method of \citet{chan2001active}, initialized from a circle around the brightest region, is relatively stable in some crops but usually fails to converge to the desired boundary when the intensity is weak or the image is noisy.

\begin{figure}[h]
    \centering
    \includegraphics[width=0.75\linewidth]{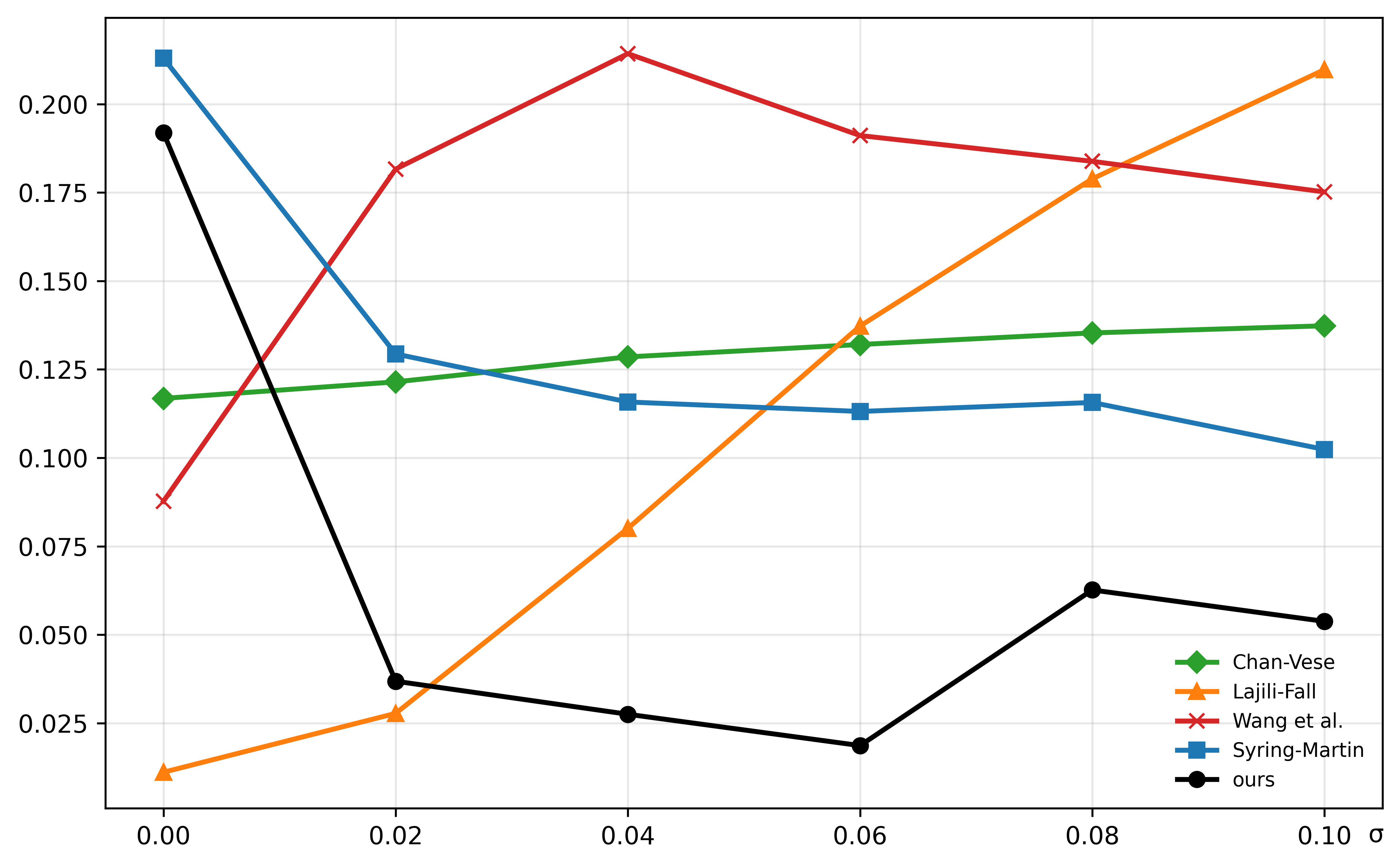}
    \caption{Symmetric difference error on cropped BBBC038 microscopy images under different Gaussian noise levels. The error is computed between the estimated region and the pseudo ground-truth mask and averaged over 65 cropped images. The proposed method performs less favorably in the noise-free case but yields lower error than the competing methods for most positive noise levels.}
    \label{fig:bbbc_error}
\end{figure}

The quantitative results are shown in Fig.~\ref{fig:bbbc_error}.  At $\sigma=0.00$, the method of \citet{lajili2025unsupervised} gives the smallest error, while the proposed method performs poorly on some crops. This is consistent with the qualitative observation that the noise-free BBBC038 images may violate the simple two-region intensity assumption. However, the error of our method decreases substantially once moderate noise is added and remains lower than those of the competing methods for most positive noise levels. We would emphasize that this behavior should not be interpreted as a general benefit of adding noise. Rather, it reflects a mismatch between the two-region intensity model and the clean BBBC038 crops, where nuclei often exhibit gradual intensity transitions. Adding moderate Gaussian noise partially suppresses such local gradation effects, making the effective two-region approximation more appropriate for the proposed calibration scheme.

\section{Concluding remarks}
\label{sec:conclusion}

We have developed a continuous-loss framework for boundary detection from unlabeled noisy images. By relaxing the discontinuous threshold-based misclassification loss of \citet{syring2020robust} into a hinge-type surrogate, we obtain an objective that is amenable to gradient-based optimization with deep neural networks while retaining the statistical identifiability of the true object region. We have established Fisher consistency and a calibration inequality for the proposed loss, and have shown that the corresponding deep network estimator attains the minimax rate $n^{-\beta/(\beta+1)}$ over a piecewise smooth boundary class that strictly generalizes the globally smooth models considered in previous work.

Our results suggest that the apparent gap between the combinatorial loss formulations used in the statistical literature on boundary inference and the continuous surrogate losses used in modern deep learning is not fundamental: a carefully designed continuous surrogate can recover the same rate-optimal guarantees without sacrificing scalability. We view this as a step toward bringing modern empirical risk minimization tools into statistically principled boundary inference.

Several directions remain open. First, our convergence theory is established for fixed loss parameters satisfying the separation condition in \cref{assume:loss}, whereas the adaptive calibration approach of \cref{subsec:dynamic_update} updates these loss parameters along the optimization trajectory. Providing a theoretical guarantee for the adaptive scheme is an important next step. Second, the present formulation treats the observed intensity $Y$ as a scalar, which corresponds to grayscale imaging. Many practical imaging modalities are inherently multichannel such as RGB photographs and multispectral satellite imagery, which produce vector-valued intensities at each pixel. The proposed framework extends naturally to this setting once a scalar score summarizing the channel-wise contrast is specified. Concretely, we can replace the scalar threshold rule $u_y = 2\ind(y > \xi) - 1$ in \eqref{eq:pseudo_lable} by a pseudo-label of the form $u_y = 2\ind(s(y) > \xi) - 1$ for some learned or prescribed score function $s$, while keeping the hinge-type loss and the adaptive calibration step unchanged. Empirical and theoretical analyses of this approach are left for future work.

\section*{Acknowledgement}
This work was supported by the National Research Foundation of Korea (NRF) funded by the Korea government (MSIT) (RS-2024-00411853) and INHA UNIVERSITY Research Grant.

\appendix
\section{Proofs}
\label{sec:proofs}

\subsection{Proof of \cref{prop:fisher_consist}}

\begin{proof}
For simplicity, let $u:=u_y$. The difference between  loss values of $g$ and the true $g_\star$ is given by
    \begin{align*}
    d_g(y,x)
        &:=\ell_g(y,x)-\ell_{g_\star}(y,x)\\
        &=(1-ug(x))_+\{\kappa(u+1)/2+\tau(1-u)/2\}\\
        &\quad -(1-ug_\star(x))_+\{\kappa(u+1)/2+\tau(1-u)/2\}\\
        &=\kappa\ind(u=1)\{(1-g(x))_+-(1-g_\star(x))_+\}\\
         &\quad +\tau\ind(u=-1)\{(1+g(x))_+-(1+g_\star(x))_+\}
    \end{align*}
Since we restrict our attention to functions bounded by 1, we have
    \begin{align}
    \label{eq:loss_diff}
    d_g(y,x)
        &=\{-\kappa\ind(u=1)+\tau\ind(u=-1)\}\{g(x)-g_\star(x)\}
    \end{align}
We aim to find $g$ that minimizes the expectation of the above.
We deal with the following two cases separately.
\begin{itemize}
    \item For $x$ such that $g_\star(x)=1$, we have $\E(u=1|x)=\P(Y>\xi|x)=1-\Fin(\xi)$. For such $x$, we have 
         \begin{align}
         \E[d_g(Y,X)|X=x]
      &=\{-\kappa(1-\Fin(\xi))+\tau \Fin(\xi)\} (g(x)-1)\nonumber\\
       &= \{\kappa-(\kappa+\tau)\Fin(\xi)\}(1-g(x)) \label{eq:mean_loss_diff_1}.
        \end{align}
            By assumption that $\kappa/(\kappa+\tau)> \Fin(\xi)$, the above is minimized at $g(x)=1=g_\star(x)$.
    \item  For $x$ such that $g_\star(x)=-1$, we have $\E(u=1|x)=\P(Y>\xi|x)=1-\Fout(\xi)$.
    For such $x$, we have 
        \begin{align}
        \E[d_g(Y,X)|X=x] 
        & =\{ \tau \Fout(\xi)-\kappa(1-\Fout(\xi))\}( 1+g(x))\nonumber\\
      & =\{ (\tau+\kappa) \Fout(\xi)-\kappa\}( 1+g(x)) \label{eq:mean_loss_diff_2}.
        \end{align}
    By assumption that $\Fout(\xi)>\kappa/(\kappa+\tau)$, the above is minimized at $g(x)=-1=g_\star(x)$.
\end{itemize}
These considerations reveal that the conditional expectation $\E[d_g(Y,X)|X=x]$ is minimized at $g_\star(x)$ always, which completes the proof.
\end{proof}

\subsection{Proof of \cref{prop:calib}}

\begin{proof}
In the proof of \cref{prop:fisher_consist}, we have shown that
    \begin{align}
    \label{eq:lower_bound}
        \E[\ell_g(y,x)-\ell_{g_\star}(y,x)|X=x]
        \ge \tC' |g_\star(x)-g(x)|,
    \end{align}
where the constant $\tC'$ is given by
    \begin{align*}
       \tC':= \min\cbr{\kappa-(\kappa+\tau)\Fin(\xi), (\tau+\kappa) \Fout(\xi)-\kappa}
    \end{align*}
and is positive due to   \cref{assume:loss}.
Taking the expectation to both sides of \eqref{eq:lower_bound} with respect to $\Q$, we have
      \begin{align*}
        \cR(g)-\cR(g_\star)\ge \tC'\int |g(x)-g_\star(x)|\d\Q(x)
    \end{align*}
Moreover, by \cref{assume:pixel}, we have
    \begin{align*}
      \int |g(x)-g_\star(x)|\d\Q(x)\ge \frac{1}{A}\|g-g_\star\|_{\cL_1([0,1]^2)}.
    \end{align*}    
Therefore, the desired result follows from that
    \begin{align*}
        \lambda(\Gamma \triangle \Gamma_\star)
     & =  \int \ind(g(x)<0,g_\star(x)=1) \d \lambda(x)
     + \int \ind(g(x)\ge0,g_\star(x)=-1) \d \lambda(x)\\
     &\le \int |1-g(x)| \ind(g_\star(x)=1) \d \lambda(x)
     +\int |-1-g(x)| \ind(g_\star(x)=-1) \d \lambda(x)\\
     &=\int |g_\star(x)-g(x)|\d \lambda(x)=\|g-g_\star\|_{\cL_1([0,1]^2)}.
    \end{align*}
    
\end{proof}

\subsection{Proof of \cref{thm:approximation}}

\begin{proof}
By Theorem A.1 of \citet{fang2024intrinsic}, there exists a neural network $h^\dag_j\in \cG(L_0', \tC_1' D, \tC_1 'D^{\tC_2'})$ such that
    \begin{align*}
        \|h^\dag_j-h_j\|_{\cL_\infty([0,1])} \le \epsilon:=\tC_3'D^{-2\beta}.
    \end{align*}
for some positive constants $L_0'$, $\tC_1'$, $\tC_2'$ and $\tC_3'$. Without loss of generality, we assume $K_j =\{x: x_1\ge h_j(x_2)\}$ as the proofs for the other cases are exactly the same. We then construct a neural network $k^\dag_j$ as 
    \begin{align*}
        k^\dag_j (x)&= \frac{1}{2}\cbr{\rho\del[1]{\epsilon^{-1}(x_1-h^\dag_j(x_2)+\epsilon)}-\rho\del[1]{\epsilon^{-1}(x_1-h^\dag_j(x_2)-\epsilon)}}\\
        &=
        \begin{cases}
            1 & \text{ if $x_1\ge h^\dag_j(x_2)+\epsilon$}\\
            0 & \text{ if $x_1<h^\dag_j(x_2)-\epsilon$}\\
             \frac{1}{2}\epsilon^{-1}(x_1-h^\dag_j(x_2))+\frac{1}{2} & \text{ otherwise}
        \end{cases}.
    \end{align*}
Therefore, since $h_j(x_2)-\epsilon\le h^\dag_j(x_2)\le h_j(x_2)+\epsilon$ for any $x$,
    \begin{align*}
        \int & |k^\dag_j (x)-\ind(x_1\ge h_j(x_2))|\d\lambda(x)\\
        &= \int_{\{x:x_1\ge h_j(x_2)\}}  |k^\dag_j (x)-1|\d\lambda(x)
        +\int_{\{x:x_1< h_j(x_2)\}} |k^\dag_j (x)|\d\lambda(x)\\
        &\le \int_{\{x:x_1\ge h^\dag_j(x_2)-\epsilon\}}  |k^\dag_j (x)-1|\d\lambda(x)
        +\int_{\{x:x_1< h^\dag_j(x_2)+\epsilon\}} |k^\dag_j (x)|\d\lambda(x)\\
         &=  \int_{\{x: h^\dag_j(x_2)+\epsilon\ge x_1\ge h^\dag_j(x_2)-\epsilon\}} \frac{1}{2}\abs{\epsilon^{-1}(x_1-h^\dag_j(x_2))-1} \d\lambda(x)\\
       &\quad  +\int_{\{x:h_j(x_2)-\epsilon\le x_1< h^\dag_j(x_2)+\epsilon\}}\frac{1}{2}\abs{\epsilon^{-1}(x_1-h^\dag_j(x_2))+1} \d\lambda(x)\\
       &=\int_0^1 \int_{h^\dag_j(x_2)-\epsilon}^{h^\dag_j(x_2)+\epsilon}\frac{1}{2}\abs{\epsilon^{-1}(x_1-h^\dag_j(x_2))-1} \d x_1 \d x_2\\
       &\quad +\int_0^1 \int_{h^\dag_j(x_2)-\epsilon}^{h^\dag_j(x_2)+\epsilon}\frac{1}{2}\abs{\epsilon^{-1}(x_1-h^\dag_j(x_2))+1} \d x_1 \d x_2\\
       &\le 2\epsilon.
    \end{align*}
Finally, invoking Lemma 16 of \citet{fan2024factor}, which provides an accurate neural network approximation of the multiplication operation, we obtain
    \begin{align*}
       \| g^\dag - g_\star\|_{\cL_1([0,1]^2)}
       &\le  \norm{ g^\dag - \prod _{j=1}^J k_j^\dag}_{\cL_1([0,1]^2)}
       +\norm{  \prod _{j=1}^J k_j^\dag - g_\star}_{\cL_1([0,1]^2)}\\
       &\le D^{-2\beta}+\sum_{j=1}^J\norm{k^\dag_j (x)-\ind(x_1\ge h_j(x_2))}_{\cL_1([0,1]^2)}\\
       &\le D^{-2\beta}+2J\epsilon
    \end{align*}
for some neural network $g^\dag$ whose architecture satisfies the required condition.
\end{proof}

\subsection{Proof of \cref{thm:conv} }

Before proceeding to the proof, we provide one technical lemma.

\begin{lemma}
\label{lemma:variance}
Under \cref{assume:loss}, there exists an absolute constant $\tC_0>0$ such that
    \begin{align*}
        \E[(\ell_g(Y,X)-\ell_{g_\star}(Y,X))^2]\le \tC_0 \{\cR(g)-\cR(g_\star)\}.
        \end{align*}
for any $g\in\bcG.$
\end{lemma}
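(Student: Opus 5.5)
We reuse the pointwise identity \eqref{eq:loss_diff} from the proof of \cref{prop:fisher_consist}. For $g\in\bcG$ and $g_\star$ taking values in $\{-1,1\}$, both hinge terms are in their linear regime, so
\begin{align*}
d_g(y,x)=\{-\kappa\ind(u_y=1)+\tau\ind(u_y=-1)\}\{g(x)-g_\star(x)\}.
\end{align*}
The prefactor is bounded in absolute value by $\max\{\kappa,\tau\}$. Hence
\begin{align*}
d_g(y,x)^2\le \max\{\kappa,\tau\}^2\,|g(x)-g_\star(x)|^2\le 2\max\{\kappa,\tau\}^2\,|g(x)-g_\star(x)|,
\end{align*}
where the last step uses $|g-g_\star|\le 2$ because both functions are bounded by $1$. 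Taking expectations gives $\E[d_g(Y,X)^2]\le 2\max\{\kappa,\tau\}^2\int|g-g_\star|\,\d\Q$.

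For the other side we use the conditional lower bound \eqref{eq:lower_bound} from the proof of \cref{prop:calib}. It comes from the case formulas \eqref{eq:mean_loss_diff_1} and \eqref{eq:mean_loss_diff_2}:
\begin{align*}
\E[d_g(Y,X)\mid X=x]\ge \tC'|g(x)-g_\star(x)|,\qquad \tC'=\min\{\kappa-(\kappa+\tau)\Fin(\xi),(\kappa+\tau)\Fout(\xi)-\kappa\}>0.
\end{align*}
In those formulas $1-g(x)=|g(x)-g_\star(x)|$ on $\{g_\star=1\}$ and $1+g(x)=|g(x)-g_\star(x)|$ on $\{g_\star=-1\}$. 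Integrating against $\Q$ yields $\cR(g)-\cR(g_\star)\ge \tC'\int|g-g_\star|\,\d\Q$.

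Combining the two bounds gives the claim with $\tC_0=2\max\{\kappa,\tau\}^2/\tC'$. This argument does not need \cref{assume:pixel}, since both sides are integrals with respect to the same measure $\Q$.

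No step is a real obstacle. The one point to watch is the linearization of the hinge: it needs $|g|\le 1$, so that $1\mp g\ge 0$, and it needs $g_\star\in\{-1,1\}$. Both hold on $\bcG$, and the truncation of the networks ensures $\cG(L,M,B)\subset\bcG$. The constant $\tC_0$ depends only on $(\xi,\kappa,\tau)$ and on $\Fin,\Fout$, which are treated as fixed.
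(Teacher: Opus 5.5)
Your proof is correct and follows essentially the same route as the paper: you linearize the hinge via \eqref{eq:loss_diff}, use $(g-g_\star)^2\le 2|g-g_\star|$, and compare with the conditional excess-risk formulas \eqref{eq:mean_loss_diff_1}--\eqref{eq:mean_loss_diff_2}. The only cosmetic difference is that you pass through $\int|g-g_\star|\,\d\Q$ and bound the prefactor by $\max\{\kappa,\tau\}^2$, whereas the paper compares the conditional second and first moments pointwise in $x$ before integrating.
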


\begin{proof}
Recall the identity we have established in the proof of \cref{prop:fisher_consist}:
  \begin{align*}
       d_g(y,x)
       &:=\ell_g(y,x)-\ell_{g_\star}(y,x)\\
        &=\{-\kappa\ind(u=1)+\tau\ind(u=-1)\}\{g(x)-g_\star(x)\}
    \end{align*}
where $u:=u_y$. Note that $|g(x)|\le 1$ for any $x$. For $x$ such that $g_\star(x)=1$, we have 
         \begin{align*}
         \E[ d_g(Y,X)^2|X=x]
         &=\E[\kappa^2\ind(u=1)+\tau^2\ind(u=-1)](g(x)-1)^2\\
         &\le 2\E[\kappa^2\ind(u=1)+\tau^2\ind(u=-1)](1-g(x))\\
         &= \frac{2\E[\kappa^2\ind(u=1)+\tau^2\ind(u=-1)]}{\kappa-(\kappa+\tau)\Fin(\xi)}\E[d_g(Y,X)|X=x],
       \end{align*}
where we use the equality \eqref{eq:mean_loss_diff_1} established in the proof of \cref{prop:fisher_consist} for the last equality.
Similarly, for $x$ such that $g_\star(x)=-1$, we have 
         \begin{align*}
         \E[ d_g(Y,X)^2|X=x]
         &=\E[\kappa^2\ind(u=1)+\tau^2\ind(u=-1)](g(x)+1)^2\\
         &\le 2\E[\kappa^2\ind(u=1)+\tau^2\ind(u=-1)](g(x)+1)\\
         &=  \frac{2\E[\kappa^2\ind(u=1)+\tau^2\ind(u=-1)]}{(\tau+\kappa) \Fout(\xi)-\kappa}\E[d_g(Y,X)|X=x],
        \end{align*}
where we use the equality \eqref{eq:mean_loss_diff_2} for the last equality.
As these inequalities uniformly hold, we get the desired result.
\end{proof}

\begin{proof}[Proof of of \cref{thm:conv}]
We first introduce some additional notation. Let $\cG_n:=\cG(L_0, M_n, B_n)$. Let $\hcR(g):=n^{-1}\sum_{i=1}^n \ell_g(Y_i,X_i)$ denote the empirical risk of a function $g$. Moreover, we denote the excess population and empirical risks of $g$ by
    \begin{align*}
        \cE(g):=\cR(g)-\cR(g_\star),
        \quad
        \hcE(g):=\hcR(g)-\hcR(g_\star),
    \end{align*}
respectively. We begin with the inequality
    \begin{align}
        \cR(\hg_n)-\cR(g_\star)
        &=2\{\hcR(\hg_n)-\hcR(g_\star)\} + \cR(\hg_n)-\cR(g_\star)-2\{\hcR(\hg_n)-\hcR(g_\star)\}\nonumber\\
        &\le 2\{\hcR(\hg_n)-\hcR(g_\star)\}+\sup_{g\in\cG_n}\{\cE(g)-2\hcE(g)\}.\label{eq:basic_ineq}
    \end{align}
By the optimization optimality of $\hg_n$ and the Lipschitzness of the loss function, the first term in \eqref{eq:basic_ineq} is bounded as
    \begin{align*}
       \hcR(\hg_n)-\hcR(g_\star)
       \le \hcR(g^\dag)-\hcR(g_\star)
       \lesssim \frac{1}{n}\sum_{i=1}^n|g^\dag(X_i)-g_\star(X_i)|,
    \end{align*}
where $g^\dag$ is a neural network that closely approximates $g_\star$, whose existence is guaranteed by \cref{thm:approximation}. By \cref{assume:pixel}, we have $ \E[|g^\dag(X_i)-g_\star(X_i)|]\lesssim \|g^\dag-g_\star\|_{\cL_1([0,1]^2)}$ and
    \begin{align*}
        \var(|g^\dag(X_i)-g_\star(X_i)|)\le \E[|g^\dag(X_i)-g_\star(X_i)|^2]&\lesssim \|g^\dag-g_\star\|_{\cL_1([0,1]^2)}.
    \end{align*}
where the second inequality follows from that $|g^\dag(x)-g_\star(x)|\le 2$ for any $x\in[0,1]^2$. Therefore, by Bernstein's inequality, we have
    \begin{align}
    \label{eq:term1}
         \frac{1}{n}\sum_{i=1}^n|g^\dag(X_i)-g_\star(X_i)|
         \lesssim  \|g^\dag-g_\star\|_{\cL_1([0,1]^2)}+\frac{\log(1/\delta)}{n}
    \end{align}
with probability at least $1-\delta/2$. For the second term in \eqref{eq:basic_ineq}, we consider a minimal $n^{-1}$-net $\{g_j^\circ:j\in[N]\}$ of the function space $\cG_n$ with respect to the $\cL_\infty$-norm.
Then by the Lipschitzness of the loss function, we have
    \begin{align*}
        \sup_{g\in\cG_n}\{\cE(g)-2\hcE(g)\}\lesssim  \frac{1}{n} +\max_{1\le j\le N}\{\cE(g_j^\circ)-2\hcE(g_j^\circ)\}.
    \end{align*}
For $g\in\bcG$, define a random variable
    \begin{align*}
        Z^g_i:=\E[d_g(Y,X)]-d_g(Y_i,X_i) \text{ with }d_g(Y_i,X_i):=\ell_g(Y_i,X_i)-\ell_{g_\star}(Y_i,X_i)
    \end{align*}
so that we write $\cE(g)-\hcE(g)=n^{-1}\sum_{i=1}^nZ^g_i$. Then for any $g\in\bcG$ and $i\in[n]$, we have $|Z^g_i|\le A_1:=8\max\{\kappa,\tau\}$, i.e., $Z^g_i$ is a bounded random variable. Moreover, we have $\var(Z^g_i)\le A_2\cE(g)$ for some constant $A_2>0$ by \cref{lemma:variance}. Then by Bernstein's inequality, we have
    \begin{align*}
      \P\del{\cE(g)-2\hcE(g)\ge t}  
      &=\P\del{\cE(g)-\hcE(g)\ge \frac{t}{2}+\frac{1}{2}\cE(g)} \\
      &=\P\del{\sum_{i=1}^nZ^g_i\ge  \frac{nt}{2}+\frac{n}{2}\cE(g)}\\
     & \le \exp\del{-\frac{n^2(t+\cE(g))^2/8}{n\sum_{i=1}^n\var(Z^g_i)+nA_1t/3}}\\
       & \le \exp\del{-\frac{n(t+\cE(g))^2/8}{A_2\cE(g)+A_1t/3}}\\
     &\le \exp\del{-\tC_1'n(t+\cE(g))}\le \exp(-\tC'_1nt)
    \end{align*}
for some constant $\tC_1'>0$. Hence, by the union bound
    \begin{align*}
        \P\del{\max_{1\le j\le N}\{\cE(g_j^\circ)-2\hcE(g_j^\circ)\}\ge t}
        &\le \sum_{j=1}^N\P\del{\cE(g_j^\circ)-2\hcE(g_j^\circ)\ge t}\\
        &\le \exp(\log N-\tC_1'nt).
    \end{align*}
Thus, by taking $t=\{\log N+\log(2/\delta)\}/(\tC_1'n)$, we have
    \begin{align}
     \label{eq:term2}
         \sup_{g\in\cG_n}\{\cE(g)-2\hcE(g)\}
         \lesssim \|g^\dag-g_\star\|_{\cL_\infty([0,1]^2)}+ \frac{\log N}{n} +\frac{\log(1/\delta)}{n}
    \end{align}
with probability at least $1-\delta/2$. By putting together the last display and \eqref{eq:term1}, we obtain
     \begin{align}
         \cR(\hg_n)-\cR(g_\star)
         \lesssim \|g^\dag-g_\star\|_{\cL_\infty([0,1]^2)}+ \frac{\log N}{n} +\frac{\log(1/\delta)}{n}
    \end{align}
with probability at least $1-\delta$. Next, we apply \cref{thm:approximation} with the choice $D\asymp n^{1/(2\beta+2)}$. Then by the well-known bound of the log covering number of a neural network function class \citep[e.g., Lemma K.4 of][]{ohn2024adaptive}, we have $\log(N)\lesssim D^2\log n\asymp n^{1/(\beta+1)}\log n$ . Thus, we have
  \begin{align*}
         \cR(\hg_n)-\cR(g_\star)
         \lesssim n^{-\beta/(\beta+1)}\log n+\frac{\log(1/\delta)}{n}
    \end{align*}
with probability at least $1-\delta$. The calibration inequality in \cref{prop:calib} concludes the proof.
\end{proof}

\bibliographystyle{plainnat}
\bibliography{_references}

\end{document}